\def\Prob{\mathrm{I\!Pr}}
\def\union{\cup}
\def\cut{\cap}
\def\eps{\varepsilon}
\def\NP{\mathbf{NP}}
\newcommand{\reals}{\mathbb{R}}
\def\IE{\mathrm{IE}}
\def\RIE{R_{\mbox{\fontsize{7pt}{7pt}\selectfont IE}}}
\def\SDP{\mbox{SDP-IE}}
\def\sgn{\mathrm{sgn}}
\def\insertfig#1#2#3{%
\begin{figure}[t]
\centerline{\includegraphics[width=#1]{#2}}
\caption{#3}
\end{figure}}
\title{On the Efficiency of Influence-and-Exploit Strategies\\
for Revenue Maximization under Positive Externalities%
%
%
%
\thanks{Research partially supported by an NTUA Basic Research Grant (PEBE 2009).}}
\author{Dimitris Fotakis \and Paris Siminelakis}
\institute{%
School of Electrical and Computer Engineering,\\
National Technical University of Athens, 157 80 Athens, Greece.\\
Email: {\tt fotakis@cs.ntua.gr},\ \ {\tt psiminelakis@gmail.com}}
\begin{document}

\maketitle

\begin{abstract}
We study the problem of revenue maximization in the marketing model for social networks introduced by (Hartline, Mirrokni, Sundararajan, WWW '08). In this setting, a digital product is sold to a set of potential buyers under positive externalities, and the seller seeks for a marketing strategy, namely an ordering in which he approaches the buyers and the prices offered to them, that maximizes his revenue. We restrict our attention to the Uniform Additive Model
%
%
and mostly focus on Influence-and-Exploit (IE) marketing strategies. We obtain a comprehensive collection of results on the efficiency and the approximability of IE strategies, which also imply a significant improvement on the best known approximation ratios for revenue maximization.
Specifically, we show that in the Uniform Additive Model, both computing the optimal marketing strategy and computing the best IE strategy are $\NP$-hard for undirected social networks.
We observe that allowing IE strategies to offer prices smaller than the myopic price in the exploit step leads to a measurable improvement on their performance.
Thus, we show that the best IE strategy approximates the maximum revenue within a factor of $0.911$ for undirected and of roughly $0.553$ for directed networks.
Moreover, we present a natural generalization of IE strategies, with more than two pricing classes, and show that they approximate the maximum revenue within a factor of roughly $0.7$ for undirected and of roughly $0.35$ for directed networks.
Utilizing a connection between good IE strategies and large cuts in the underlying social network, we obtain polynomial-time algorithms that approximate the revenue of the best IE strategy within a factor of roughly $0.9$.
Hence, we significantly improve on the best known approximation ratio for revenue maximization to $0.8229$ for undirected and to $0.5011$ for directed networks (from $2/3$ and $1/3$, respectively, by Hartline et al.).
\end{abstract}

\thispagestyle{empty}%
\pagenumbering{arabic}%
\setcounter{page}{0}%

\newpage
\pagestyle{plain}
\pagenumbering{arabic}

\section{Introduction}
\label{s:intro}

An important artifact of the Internet is the appearance of Social Networks. For the first time, we posses an account of the friendship network for millions of people. This information is invaluable for targeted advertising, personalized recommendations, and in general, for enhancing business intelligence. However, there is a noticeable discrepancy between the perceived value of Social Networks and the actual revenue they generate. For instance, Facebook is valued by Goldman Sachs at \$50 billions, but its actual revenue is estimated at \$2.2 billions by eMarketer. The widespread belief is that much of the commercial potential of Social Networks remains unexploited.

This premise has motivated a significant volume of research in the direction of monetizing Social Networks. Recent research has studied the impact of externalities in a variety of settings (see e.g. \cite{Hart,Arthur,Iterative,Anari,Chen,Bimpikis,HIMM11,FKT11}). In this work, we are interested in the design of efficient marketing strategies that exploit positive externalities and maximize the seller's revenue. We focus on the setting where the utility of the product depends inherently on the scale of the product's adoption. E.g. the value of a social network depends on the fraction of the population that uses it on a regular basis. In fact, for many products (e.g., cell phones, online gaming), the value of the product to a buyer depends on the set of his friends that have already adopted the product. In this setting, the seller seeks for a marketing strategy that guarantees a significant revenue through a wide adoption of the product, which leads to an increased value, and consequently, to a profitable pricing of it.

\smallskip\noindent{\bf A Model of Marketing Strategies for Social Networks.}
We adopt the model of Hartline, Mirrokni, and Sundararajan \cite{Hart}, where a digital product is sold to a set of potential buyers under positive externalities. As in \cite{Hart}, we assume an unlimited supply of the product and that there is no production cost for it. A (possibly directed) social network $G$ on the set of potential buyers models how their value of the product is affected by other buyers who already own the product.
Namely, an edge $e = (j, i)$ in the social network denotes that the event that $j$ owns the product has a positive influence on $i$'s value of the product. The product's value to each buyer $i$ is given by a non-decreasing function $v_i(S)$ of the set $S$ of buyers who own the product and have positive influence on (equivalently, an edge to) $i$. The exact values $v_i(S)$ are unknown and are treated as random variables of which only the distributions $F_{i,S}$ are known to the seller.
The most interesting cases outlined in \cite{Hart} are the \emph{Concave Graph Model}, where each $v_i(S)$ is a submodular function of $S$, and the \emph{Uniform Additive Model}, where a non-negative weight $w_{ji}$ is associated with each edge $(j, i)$, and $v_i(S)$ is uniformly distributed between $0$ and the total weight of the edges from $S$ to $i$. An important special case of the Uniform Additive Model is the undirected (or the symmetric) case, where the network $G$ is undirected and $w_{ij} = w_{ji}$ for all edges $\{ i, j \}$.

In this setting, the seller approaches each potential buyer once and makes an individualized offer to him. A \emph{marketing strategy} determines the sequence in which the seller approaches the buyers, and the price offered to each buyer. Each buyer either accepts the offer, in which case he pays the price asked by the seller, or rejects it, in which case he pays nothing and never receives an offer again. The seller's goal is to compute a marketing strategy that maximizes his revenue, namely the total amount paid by the buyers accepting the offer.

Using a transformation from Maximum Acyclic Subgraph, Hartline et al. \cite{Hart} proved that if the seller has complete knowledge of the buyers' valuations, computing a revenue-maximizing ordering of the buyers is $\NP$-hard for directed social networks. Combined with the result of \cite{GMR08}, this transformation suggests an upper bound of $0.5$ on the approximability of revenue maximization for directed networks and deterministic additive valuations. On the positive side, Hartline et al. gave a polynomial-time dynamic programming algorithm for a special fully symmetric case, where the order in which the seller approaches the buyers is insignificant.

An interesting contribution of \cite{Hart} is a class of elegant marketing strategies called \emph{Influence-and-Exploit}. An Influence-and-Exploit (IE) strategy first offers the product for free to a selected subset of buyers, aiming to increase the value of the product to the remaining buyers (influence step). Then, in the exploit step, it approaches the remaining buyers, in a random order, and offers them the product at the so-called \emph{myopic price}. The myopic price ignores the current buyer's influence on the subsequent buyers and maximizes the expected revenue extracted from him. In the Uniform Additive Model, the myopic price is accepted by each buyer with probability $1/2$. Hence, there is a notion of uniformity in the prices offered in the exploit step, in the sense that each buyer accepts the offer with a fixed probability, and we can say that the IE strategy uses a \emph{pricing probability} of $1/2$.

To demonstrate the efficiency of IE strategies, Hartline et al. \cite{Hart} proved that the best IE strategy approximates the maximum revenue within a factor of $0.25$ for the Concave Graph Model, and within a factor of $0.94$ for the (polynomially solvable) fully symmetric case of the Uniform Additive Model. Furthermore, they proved that if each buyer is selected in the influence set randomly, with an appropriate probability, the expected revenue of IE is at least $2/3$ (resp. $1/3$) times the maximum revenue of undirected (resp. directed) social networks. For the Concave Graph Model, Hartline et al. presented a polynomial-time local search algorithm which approximates the revenue of the best IE strategy within a factor of $0.4$.
%
%
Since \cite{Hart}, the Influence-and-Exploit paradigm has been applied to a few other settings involving revenue maximization under positive externalities (see e.g. \cite{Arthur,Bimpikis,HIMM11}).

\smallskip\noindent{\bf Contribution and Techniques.}
Despite the fact that IE strategies are simple, elegant, and quite promising in terms of efficiency, their performance against the maximum revenue and their polynomial-time approximability are hardly well understood. In this work, we restrict our attention to the important case of the Uniform Additive Model, and obtain a comprehensive collection of results on the efficiency and the approximability of IE strategies. Our results also imply a significant improvement on the best known approximation ratio for revenue maximization in the Uniform Additive Model.

We first show that in the Uniform Additive Model, both computing the optimal marketing strategy and computing the best IE strategy are $\NP$-hard for undirected social networks%
\footnote{We should highlight that if the seller has complete knowledge of the buyers' valuations, finding a revenue-maximizing buyer ordering for undirected social networks is polynomially solvable (cf. Lemma~\ref{l:ordering}). Therefore, the reduction of \cite{Hart} does not imply the $\NP$-hardness of revenue maximization for undirected networks.}.
Next, we embark on a systematic study of the algorithmic properties of IE strategies (cf. Section~\ref{s:ie}). In \cite{Hart}, IE strategies are restricted, by definition, to the myopic pricing probability, which for the Uniform Additive Model is $1/2$. A bit surprisingly, we observe that we can achieve a measurable improvement on the efficiency of IE strategies if we use smaller prices (equivalently, a larger pricing probability) in the exploit step. Thus, we let IE strategies use a carefully selected pricing probability $p \in [1/2, 1)$.
 %

We show the existence of an IE strategy with pricing probability $0.586$ (resp. $2/3$) which approximates the maximum revenue within a factor of $0.9111$ for undirected (resp. $0.55289$ for directed) networks.
The proof assumes a revenue-maximizing pricing probability vector $\vec{p}$ and constructs an IE strategy with the desired expected revenue by applying randomized rounding to $\vec{p}$. An interesting consequence is that the upper bound of $0.5$ on the approxibability of the maximum revenue of directed networks does not apply to the Uniform Additive Model. In Section~\ref{s:ie}, we discuss the technical reasons behind this and show a pair of upper bounds on the approximability of the maximum revenue of directed social networks in the Uniform Additive Model. Specifically, assuming the Unique Games conjecture, we show that it is $\NP$-hard to approximate the maximum revenue within a factor greater than $27/32$, and greater than $3/4$, if we use an IE strategy with pricing probability $2/3$.

The technical intuition behind most of our results comes from the apparent connection between good IE strategies and large cuts in the underlying social network. Following this intuition, we optimize the parameters of the IE strategy of \cite{Hart} and slightly improve the approximation ratio to $0.686$ (resp. $0.343$) for undirected (resp. directed) social networks. Moreover, we show that for undirected bipartite social networks, an IE strategy extracts the maximum revenue and can be computed in polynomial time.
Building on the idea of generating revenue from large cuts in the network, we discuss, in Section~\ref{s:ie-gen}, a natural generalization of IE strategies that use more than two pricing classes. We show that these strategies approximate the maximum revenue within a factor of $0.7032$ for undirected networks and of $0.3516$ for directed networks.

The main hurdle in obtaining better approximation guarantees for the maximum revenue problem is the lack of any strong upper bounds on it. In Section~\ref{s:sdp}, we obtain a strong Semidefinite Programming (SDP) relaxation for the problem of computing the best IE strategy with any given pricing probability. Our approach exploits the resemblance between computing the best IE strategy and the problems of MAX-CUT and MAX-DICUT, and builds on the elegant approach of Goemans and Williamson \cite{GW95} and Feige and Goemans \cite{FG95}. Solving the SDP relaxation and using randomized rounding, we obtain a $0.9032$ (resp. $0.9064$) approximation  for the best IE strategy with a pricing probability of $0.586$ for undirected networks (resp. of $2/3$ for directed networks).
Combining these results with the bounds on the fraction of the maximum revenue extracted by the best IE strategy, we significantly improve on the best known approximation ratio for revenue maximization to $0.8229$ for undirected networks and $0.5011$ for directed networks (from $2/3$ and $1/3$, respectively, in \cite{Hart}).

\smallskip\noindent{\bf Other Related Work.}
Our work lies in the area of pricing and revenue maximization in the presence of positive externalities, and more generally, in the wide area of social contagion.
%
%
In this framework, Domingos and Richardson \cite{DR01} studied viral marketing and investigated how a small group of early adopters can be selected, so that the spread of a product is maximized. Subsequently, Kempe, Kleinberg and Tardos \cite{KKT03} considered this question from an algorithmic viewpoint, under the problem of influence maximization, which has received considerable attention since then.

Hartline et al. \cite{Hart} were the first to consider social influence in the framework of revenue maximization. Since then, relevant research has focused either on posted price strategies, where there is no price discrimination, or on game theoretic considerations, where the buyers act strategically according to their perceived value of the product. To the best of our knowledge, our work is the first that considers the approximability of revenue maximization and of computing the best IE strategy, which were the central problems studied in \cite{Hart}. 

Regarding posted price strategies for revenue maximization, Arthur et al. \cite{Arthur} considered a model where the seller cannot approach potential buyers. Instead, only recommendations about the product cascade through the network from an initial seed of early adopters. They gave an Influence-and-Exploit-based constant-factor approximation algorithm for the maximum revenue in this setting.
Akhlaghpour et al. \cite{Iterative} considered iterative posted price strategies where all interested buyers can buy the product at the same price at a given time. They studied the revenue maximization problem under two different repricing models, both allowing for at most $k$ prices. If frequent repricing is allowed, they proved that revenue maximization is $\NP$-hard to approximate, and identified a special case of the problem that can be approximated within reasonable factors. If repricing can be performed only at a limited rate, they presented an FPTAS for revenue maximization.
Anari et al. \cite{Anari} considered a posted price setting where the product exhibits historical externalities. Given a fixed price trajectory, to which the seller commits himself, the buyers decide when to buy the product. In this setting, Anari et al. studied existence and uniqueness of equilibria. They also presented an FPTAS for some special cases of the problem of computing a price trajectory that maximizes the seller's revenue.

In a complementary direction, Chen et al. \cite{Chen} investigated the Nash equilibria and the Bayesian-Nash equilibria when each buyer's value of the product depends on the set of buyers who own the product. They focused on two classes of equilibria, pessimistic and optimistic ones, and showed how to compute these equilibria and how to find revenue-maximizing prices.
Candogan et al. \cite{Bimpikis} investigated a scenario where a monopolist sells a divisible good to buyers under positive externalities. They considered a two-stage game where the seller first sets an individual price for each buyer, and then the buyers decide on their consumption level, according to a utility function of the consumption levels of their neighbors. They proved that the optimal price for each buyer is proportional to his Bonacich centrality, and that if the buyers are partitioned into two pricing classes (which is conceptually similar to Influence-and-Exploit), the problem is reducible to MAX-CUT.

\section{Model and Preliminaries}
\label{s:prelim}

\noindent{\bf The Influence Model.}
The social network is a (possibly directed) weighted network $G(V, E, w)$ on the set $V$ of potential buyers. There is a positive weight $w_{ij}$ associated with each edge $(i, j) \in E$ (we assume that $w_{ij} = 0$ if $(i, j) \not\in E$). A social network is undirected (or symmetric) if $w_{ij} = w_{ji}$ for all $i, j \in V$, and directed (or asymmetric) otherwise. There may exist a non-negative weight $w_{ii}$ associated with each buyer $i$\,%
\footnote{For simplicity, we ignore $w_{ii}$'s for directed social networks. This is without loss of generality, since we can replace each $w_{ii}$ by an edge $(i', i)$ of weight $w_{ii}$ from a new node $i'$ with a single outgoing edge $(i', i)$ and no incoming edges.}.
Every buyer has a value $v_{i}: 2^{N_{i}} \mapsto \reals_{+}$ of the product, which depends on $w_{ii}$ and on the set $S \subseteq N_i$ of his neighbors who already own the product, where $N_i = \{ j \in V \setminus \{i\}: (j, i) \in E \}$. We assume that the exact values of $v_{i}$ are unknown to the seller, and that for each buyer $i$ and each set $S \subseteq N_i$, the seller only knows the probability distribution $F_{i,S}(x) = \Prob[v_{i}(S) < x]$ that buyer $i$ rejects an offer of price $x$ for the product. 

In this work, we focus on the \emph{Uniform Additive Model} \cite[Section~2.1]{Hart}, which can be regarded as an extension of the Linear Threshold Model of social influence introduced in \cite{KKT03}. In the Uniform Additive Model, the values $v_{i}(S)$ are drawn from the uniform distribution in $[0, M_{i,S}]$, where $M_{i,S} = \sum_{j \in S \cup \{i\}} w_{ji}$ is the total influence perceived by $i$, given the set $S$ of his neighbors who own the product. Then, the probability that $i$ rejects an offer of price $x$ is $F_{i,S}(x) = x / M_{i,S}$.

\smallskip\noindent{\bf Myopic Pricing.}
The \emph{myopic price} disregards any externalities imposed by $i$ on his neighbors, and simply maximizes the expected revenue extracted from buyer $i$, given that $S$ is the current set of $i$'s neighbors who own the product. For the Uniform Additive Model, the myopic price is $M_{i,S}/2$, the probability that buyer $i$ accepts it is $1/2$, and the expected revenue extracted from him with the myopic price is $M_{i,S} / 4$, which is the maximum revenue one can extract from buyer $i$ alone.

\smallskip\noindent{\bf Marketing Strategies and Revenue Maximization.}
We can usually extract more revenue from $G$ by employing a marketing strategy that exploits the positive influence between the buyers. A \emph{marketing strategy} $(\vec{\pi}, \vec{x})$ consists of a permutation $\vec{\pi}$ of the buyers and a pricing vector $\vec{x} = (x_1, \ldots, x_n)$, where $\vec{\pi}$ determines the order in which the buyers are approached and $\vec{x}$ the prices offered to them.

We observe that for any buyer $i$ and any probability $p$ that $i$ accepts an offer, there is an (essentially unique) price $x_p$ such that $i$ accepts an offer of $x_p$ with probability $p$. For the Uniform Additive Model, $x_p = (1-p) M_{i,S}$ and the expected revenue extracted from buyer $i$ with such an offer is $p (1-p) M_{i,S}$.
Throughout this paper, we equivalently regard marketing strategies as consisting of a permutation $\vec{\pi}$ of the buyers and a vector $\vec{p} = (p_1, \ldots, p_n)$ of pricing probabilities.
We note that if $p_i = 1$, $i$ gets the product for free, while if $p_i = 1/2$, the price offered to $i$ is (the myopic price of) $M_{i,S}/2$. We assume that $p_i \in [1/2, 1]$, since any expected revenue in $[0, M_{i,S}/4]$ can be achieved with such pricing probabilities. Then, the expected revenue of a marketing strategy $(\vec{\pi}, \vec{p})$ is:
\begin{equation}\label{eq:rev}
   R(\vec{\pi} ,\vec{p}) = \sum_{i \in V} p_i (1 - p_i)
           \left(w_{ii} + \sum_{j: \pi_j < \pi_i} p_{j}w_{ji}\right)
\end{equation}
The problem of \emph{revenue maximization} under the Uniform Additive Model is to find a marketing strategy $(\vec{\pi}^\ast, \vec{p}^\ast)$ that extracts a maximum revenue of $R(\vec{\pi}^\ast, \vec{p}^\ast)$ from a given social network $G(V, E, w)$.

\smallskip\noindent{\bf Bounds on the Maximum Revenue.}
Let $N = \sum_{i \in V} w_{ii}$ and $W = \sum_{i < j} w_{ij}$, if the social network $G$ is undirected, and $W = \sum_{(i, j) \in E} w_{ij}$, if $G$ is directed. Then an upper bound on the maximum revenue of $G$ is $R^\ast = (W+N)/4$, and follows by summing up the myopic revenue over all edges of $G$ \cite[Fact~1]{Hart}. For a lower bound on the maximum revenue, if $G$ is undirected (resp. directed), approaching the buyers in any order (resp. in a random order) and offering them the myopic price yields a revenue of $(W+2N)/8$ (resp. $(W+4N)/16$). Thus, myopic pricing achieves an approximation ratio of $0.5$ for undirected networks and of $0.25$ for directed networks.

\smallskip\noindent{\bf Ordering and $\NP$-Hardness.}
Revenue maximization exhibits a dual nature involving optimizing both the pricing probabilities and the sequence of offers.
For directed networks, finding a good ordering $\vec{\pi}$ of the buyers bears a resemblance to the Maximum Acyclic Subgraph problem, where given a directed network $G(V, E, w)$, we seek for an acyclic subgraph of maximum total edge weight. In fact, any permutation $\vec{\pi}$ of $V$ corresponds to an acyclic subgraph of $G$ that includes all edges going forward in $\vec{\pi}$, i.e, all edges $(i, j)$ with $\pi_i < \pi_j$.
\cite[Lemma~3.2]{Hart} shows that given a directed network $G$ and a pricing probability vector $\vec{p}$, computing an optimal ordering of the buyers (for the particular $\vec{p}$) is equivalent to computing a Maximum Acyclic Subgraph of $G$, with each edge $(i, j)$ having a weight of $p_i p_j (1-p_j) w_{ij}$. Consequently, computing an ordering $\vec{\pi}$ that maximizes $R(\vec{\pi}, \vec{p})$ is $\NP$-hard and Unique-Games-hard to approximate within a factor greater than 0.5 \cite{GMR08}.

On the other hand, we show that in the undirected case, if the pricing probabilities are given, we can easily compute the best ordering of the buyers (see also Section~\ref{app:cycle}, in the Appendix, for a simple example about the importance of a good ordering in the undirected case).

\begin{lemma}\label{l:ordering}
Let $G(V, E, w)$ be an undirected social network, and let $\vec{p}$ be any pricing probability vector. Then, approaching the buyers in non-increasing order of their pricing probabilities maximizes the revenue extracted from $G$ under $\vec{p}$.
\end{lemma}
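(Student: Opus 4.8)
The plan is to exploit the undirected symmetry $w_{ij}=w_{ji}$ in order to decouple the revenue into a part that does not depend on the ordering and a sum of mutually independent per-edge contributions, each optimized by a single global rule. Starting from the revenue expression~\eqref{eq:rev}, I would first separate the self-loop term $\sum_{i} p_i(1-p_i)w_{ii}$, which is manifestly independent of $\vec{\pi}$, from the edge term. Regrouping the edge term by unordered pairs, each edge $\{i,j\}$ with common weight $w_{ij}=w_{ji}=:w_e$ contributes to exactly one of the buyers $i$ or $j$ in~\eqref{eq:rev}, namely to the one approached later, so its total contribution depends on $\vec\pi$ only through the relative order of its two endpoints.

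Next, for a fixed edge $\{i,j\}$ I would write down its contribution under the two possible relative orderings: if $j$ precedes $i$ the edge contributes $p_i(1-p_i)p_j w_e$, whereas if $i$ precedes $j$ it contributes $p_j(1-p_j)p_i w_e$. Factoring out the nonnegative quantity $p_i p_j w_e$, the two candidates reduce to $(1-p_i)$ and $(1-p_j)$, so the larger contribution is obtained precisely by letting the endpoint with the \emph{larger} pricing probability be approached first (ties being immaterial, since the two contributions then coincide).

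The final step is to observe that these per-edge preferences are globally consistent: the rule ``approach the endpoint of larger pricing probability first'' is realized simultaneously for \emph{every} edge by any permutation that lists the buyers in non-increasing order of $p_i$. Hence such a permutation maximizes each per-edge contribution independently, and therefore maximizes the total edge term, and thus $R(\vec{\pi},\vec{p})$, since the self-loop term is a constant.

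I expect the only delicate point to be the observation that the edge-wise greedy choices can be satisfied all at once by a single ordering; this is exactly where undirectedness is essential, in contrast to the directed case, where the analogous per-edge preferences may conflict and ordering becomes Maximum-Acyclic-Subgraph-hard, as noted above. As an alternative and equally short route, I would mention an adjacent-transposition (exchange) argument: starting from any ordering, swapping two neighboring buyers that appear in increasing order of $p$ alters only the term of the single edge between them, since all other relative orders are preserved, and by the comparison above this swap does not decrease the revenue; iterating such swaps sorts the permutation into non-increasing order of $\vec{p}$ without ever losing revenue.
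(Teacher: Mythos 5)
Your proposal is correct, and your primary route is a slightly different (and somewhat stronger) argument than the one in the paper. The paper proves Lemma~\ref{l:ordering} purely by an exchange argument: it takes an optimal ordering minimizing the number of inverted pairs, locates an \emph{adjacent} inverted pair $i,j$ with $p_i < p_j$, and observes that swapping them changes the revenue by $p_i p_j w_{ij}(p_j - p_i) \geq 0$, contradicting minimality. Your main argument instead decomposes the edge part of $R(\vec{\pi},\vec{p})$ from \eqref{eq:rev} into independent per-edge terms, notes that each term depends on $\vec{\pi}$ only through the relative order of the edge's two endpoints and is maximized by placing the higher-probability endpoint first, and then observes that the non-increasing order realizes all of these local optima simultaneously. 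This buys a termwise-domination statement (the sorted order weakly dominates every other order edge by edge), and it makes explicit exactly where undirectedness enters --- the per-edge preferences induced by $(1-p_i)$ versus $(1-p_j)$ are acyclic because they are derived from a single total order on the $p_i$'s, whereas in the directed case the preferences can conflict. The adjacent-transposition argument you sketch as an alternative is precisely the paper's proof, so your write-up subsumes it. The only nit: when you say the ``larger contribution is obtained by letting the endpoint with the larger pricing probability be approached first,'' it is worth stating the comparison explicitly (the contribution is proportional to $1-p$ of the \emph{later} endpoint, so the later endpoint should be the one with the smaller $p$), since that is the one place a sign error could creep in; your factorization handles it correctly.
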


\begin{proof}
We consider an optimal ordering $\vec{\pi}$ (wrt. $\vec{p}$) that minimizes the number of buyers' pairs appearing in increasing order of their pricing probabilities, namely, the number of pairs $i_1, i_2$ with $p_{i_1} < p_{i_2}$ and $\pi_{i_1} < \pi_{i_2}$. If there is such a pair in $\vec{\pi}$, we can find a pair of buyers $i$ and $j$ with $p_i < p_j$ such that $i$ appears just before $j$ in $\vec{\pi}$. Then, switching the positions of $i$ and $j$ in $\vec{\pi}$ changes the expected revenue extracted from $G$ under $\vec{p}$ by $p_i p_j w_{ij} (p_j - p_i) \geq 0$, a contradiction.
 \qed\end{proof}

A consequence of Lemma~\ref{l:ordering} is that \cite[Lemma~3.2]{Hart} does not imply the $\NP$-hardness of revenue maximization for undirected social networks. The following lemma employs a reduction from monotone One-in-Three 3-SAT, and shows that revenue maximization is $\NP$-hard for undirected networks.

\begin{lemma}\label{l:maxrev-hard}
The problem of computing a marketing strategy that extracts the maximum revenue from an undirected social network is $\NP$-hard.
\end{lemma}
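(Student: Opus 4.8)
The plan is to reduce from \emph{monotone One-in-Three 3-SAT}, which is \NP-complete: given a 3-CNF formula $\phi$ in which every literal is positive, decide whether some assignment sets exactly one literal true in each clause. By Lemma~\ref{l:ordering}, on an undirected network the optimal ordering is forced (non-increasing in $\vec{p}$), so a marketing strategy is completely described by its pricing vector $\vec{p}\in[1/2,1]^n$, and an edge $\{i,j\}$ then contributes $\min(p_i,p_j)\,(1-\min(p_i,p_j))\,\max(p_i,p_j)\,w_{ij}$ to the revenue. I would build a graph $G_\phi$ as follows: for each variable $x$ introduce a buyer $x$ together with a private \emph{pendant} buyer $y_x$ joined to $x$ by an edge of large weight $W_0$; for each clause $\{a,b,c\}$ add a unit-weight triangle on the corresponding variable-buyers. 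I encode ``$x$ true'' by $p_x=1$ (free) and ``$x$ false'' by $p_x=\tfrac12$ (myopic).

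The combinatorial heart is the triangle gadget. Evaluating the edge-revenue formula on binary prices, a clause triangle yields $\tfrac38,\ \tfrac58,\ \tfrac12,\ 0$ when zero, one, two, or three of its buyers are free, respectively; hence it is \emph{uniquely} maximized, at value $\tfrac58$, exactly when one of $a,b,c$ is true, and every deviation costs at least $\tfrac18$. A pendant edge contributes $\tfrac14 W_0$ whether $x$ or $y_x$ is free, so its value is the same for both truth assignments and the pendants add a constant $\tfrac{n_v}{4}W_0$ (with $n_v$ the number of variables) independent of the assignment. Consequently, over binary prices the maximum revenue equals $\tfrac{n_v}{4}W_0+\Sigma^\ast$ with $\Sigma^\ast=\sum_\ell(\text{triangle}_\ell)\le\tfrac58 m$, and $\Sigma^\ast=\tfrac58 m$ if and only if $\phi$ admits a One-in-Three assignment.

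The main obstacle, and the reason the statement concerns the \emph{continuous} problem, is ruling out that fractional probabilities $p_x\in(1/2,1)$ beat the binary optimum; indeed a bare triangle has a strictly better fractional optimum ($\tfrac{41}{64}>\tfrac58$), so the pendants are precisely what pins the variables down. At any optimal $\vec{p}$ the first-order condition reads $p_x^\ast=\tfrac12+(\text{downstream weight})/(2\cdot(\text{upstream-plus-self weight}))$; since the pendant edge of weight $W_0$ lies on one side of $x$ while all clause edges have weight $O(1)$, either $x$ is pushed to (and clamped at) $1$, or $x$ is held within $O(1/W_0)$ of $\tfrac12$. Thus every variable-buyer is within $O(1/W_0)$ of a binary value, and a Taylor estimate shows the total revenue differs from that of the rounded binary assignment by at most $O(m/W_0)$. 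Choosing $W_0=\Theta(m^2)$ makes this slack below $\tfrac1{16}$, well under the per-clause gap $\tfrac18$, so $R^\ast_{\max}\ge \tfrac{n_v}{4}W_0+\tfrac58 m-\tfrac1{16}$ holds if and only if $\phi$ is One-in-Three satisfiable, and testing the maximum revenue against this threshold decides monotone One-in-Three 3-SAT. I expect the delicate step to be making the $O(1/W_0)$ pinning and the $O(m/W_0)$ revenue estimate fully rigorous (in particular for variables of high clause-degree), which only requires $W_0$ polynomial in the instance size.
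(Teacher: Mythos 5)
Your reduction is correct in outline, but it takes a genuinely different route from the paper's. The paper also reduces from monotone One-in-Three 3-SAT, but it introduces \emph{fresh} set nodes per clause (an ``extended triangle'' of three new nodes attached by unit edges to the shared selection nodes), so that the edge set partitions exactly into gadgets; it then solves the \emph{continuous} revenue-maximization problem of each gadget in isolation, showing its maximum is exactly $177/128$ and is attained only when exactly one selection node has pricing probability $1/2$ (with the other configurations capped at $\approx 1.1964$ or $21/16$). This gives an exact threshold with unit weights, at the price of a fairly delicate multi-case optimization over fractional prices inside each gadget. You instead put the clause triangles directly on the variable buyers and use a heavy pendant edge $W_0$ to pin each variable to a near-binary price, which reduces the clause analysis to the four binary values $3/8,\,5/8,\,1/2,\,0$ (which you computed correctly, as is the observation that the bare triangle's fractional optimum $41/64$ exceeds $5/8$ --- this is exactly the danger your pendants neutralize). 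The price you pay is the pinning and perturbation argument, but it does go through rigorously: at a global optimum with the ordering of Lemma~\ref{l:ordering}, each $p_x$ must maximize the concave quadratic $p_x(1-p_x)\sum_{j\,\mathrm{before}\,x}p_jw_{jx}+p_x\sum_{i\,\mathrm{after}\,x}p_i(1-p_i)w_{xi}$ over $[1/2,1]$, so if $y_x$ precedes $x$ the denominator is at least $W_0$ and $p_x\le \tfrac12+\tfrac{d_x}{4W_0}$, while if $y_x$ follows $x$ then $p_{y_x}=1/2$, the numerator is at least $W_0/4$, and $p_x$ clamps to $1$; summing the Lipschitz losses over clause edges gives a slack of $O(m^2/W_0)$, so a polynomial $W_0$ suffices (your $\Theta(m^2)$ needs the constant chosen with some care, or just take $\Theta(m^3)$). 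Two small points to tidy up: clauses sharing a pair of variables produce parallel unit edges, which you should merge into one edge of integer weight (harmless, since revenue is linear in edge weights and the per-triangle accounting survives); and the pendant contribution under a fractional $p_x=\tfrac12+\eps$ drops to $(\tfrac14-\eps^2)W_0$, so rounding toward $1/2$ only increases it, which is the direction you need.
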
 

\begin{proof}
In monotone One-in-Three 3-SAT, we are given a set $V$ of $n$ items and $m$ subsets $T_1, \ldots, T_m$ of $V$, with $2 \leq |T_j| \leq 3$ for each $j \in \{ 1, \ldots, m \}$. We ask for a subset $S \subset V$ such that $|S \cut T_j| = 1$ for all $j \in \{ 1, \ldots, m \}$. Monotone One-in-Three 3-SAT is shown $\NP$-complete in \cite{SAT}. In the following, we let $m_2$ (resp. $m_3$) denote the number of $2$-item (resp. $3$-item) sets $T_j$ in an instance $(V, T_1, \ldots, T_m)$ of monotone One-in-Three 3-SAT.

\insertfig{0.6\textwidth}{gadgets}{\label{fig:maxrev-hard}Examples of (a) an extended triangle and (b) a $3$-path, used in the proof of Lemma~\ref{l:maxrev-hard}. We create an extended triangle for each $3$-item set $T_j$ and a $3$-path for each $2$-item set $T_j$. The set nodes are different for each set $T_j$, while the selection nodes are common for all sets.}

Given $(V, T_1, \ldots, T_m)$, we construct an undirected social network $G$. The network $G$ contains a \emph{selection-node} corresponding to each item in $V$. There are no edges between the selection nodes of $G$. For each $3$-item set $T_j = \{a, b, c\}$, we create an \emph{extended triangle} consisting of a triangle on three \emph{set nodes} $a_j$, $b_j$, and $c_j$, and three additional edges that connect $a_j$, $b_j$, $c_j$ to the corresponding selection nodes $a$, $b$, and $c$ (see also Fig.~\ref{fig:maxrev-hard}.a). For each $2$-item set $T_j = \{a, b\}$, we create a \emph{$3$-path} consisting of an edge connecting two set nodes $a_j$ and $b_j$, and two additional edges connecting $a_j$ and $b_j$ to the corresponding selection nodes $a$ and $b$ (see also Fig.~\ref{fig:maxrev-hard}.b). Therefore, $G$ contains $n+2m_2+3m_3$ nodes and $3m_2+6m_3$ edges. The weight of all edges of $G$ is $1$.
We next show that $(V, T_1, \ldots, T_m)$ is a {\small YES}-instance of monotone One-in-Three 3-SAT iff the maximum revenue of $G$ is at least
$\frac{177}{128}\,m_3 + \frac{3}{4}\,m_2$.

By Lemma~\ref{l:ordering}, the revenue extracted from $G$ is maximized if the nodes are approached in non-increasing order of their pricing probabilities. Therefore, we can ignore the ordering of the nodes, and focus on their pricing probabilities.
The important property is that if each extended triangle (Fig.~\ref{fig:maxrev-hard}.a) is considered alone, its maximum revenue is $177/128$, and is obtained when exactly one of the selection nodes $a, b, c$ has a pricing probability of $1/2$ and the other two have a pricing probability of $1$.
More specifically, since the selection nodes $a, b, c$ have degree $1$, the revenue of the extended triangle is maximized when they have a pricing probability of either $1$ or $1/2$.
If all $a, b, c$ have a pricing probability of $1$, the best revenue of the extended triangle is $\approx 1.196435$, and is obtained when one of $a_j$, $b_j$, and $c_j$ has a pricing probability of $\approx 0.7474$, the other has a pricing probability of $\approx 0.5715$, and the third has a pricing probability of $1/2$.
If all $a, b, c$ have a pricing probability of $1/2$, the best revenue of the extended triangle is again $\approx 1.196435$, and is obtained with the same pricing probabilities of $a_j$, $b_j$, and $c_j$.
If two of $a, b, c$ (say $a$ and $b$) have a pricing probability of $1/2$ and $c$ has a pricing probability of $1$, the best revenue of the extended triangle is $\frac{21}{16} = 1.3125$, and is obtained when one of $a_j$ and $b_j$ has a pricing probability of $1$, the other has a pricing probability of $3/4$, and $c_j$ has a pricing probability of $1/2$.
Finally, if two of $a, b, c$ (say $b$ and $c$) have a pricing probability of $1$ and $a$ has a pricing probability of $1/2$, we extract a maximum revenue from the extended triangle, which is $\frac{177}{128} = 1.3828125$ and is obtained when $a_j$ has a pricing probability of $1$, one of $b_j$ and $c_j$ has a pricing probability of $9/16$, and the other has a pricing probability of $1/2$.

Similarly, if each $3$-path (Fig.~\ref{fig:maxrev-hard}.b) is considered alone, its maximum revenue is $3/4$, and is obtained when exactly one of the selection nodes $a, b$ has a pricing probability of $1/2$ and the other has a pricing probability of $1$.
In fact, since the $3$-path is a bipartite graph, Proposition~\ref{pr:bipartite} implies that the maximum revenue, which is $3/4$, is extracted when $a_j$ and $b$ have a pricing probability of $1$ and $b_j$ and $a$ have a pricing probability of $1/2$ (or the other way around).
If both $a$ and $b$ have a pricing probability of $1$, the best revenue of the $3$-path is $41/64$ and is obtained when one of $a_j$ and $b_j$ has a pricing probability of $5/8$, and the other has a pricing probability of $1/2$.
If both $a$ and $b$ have a pricing probability of $1/2$, the best revenue of $3$-path is again $41/64$ and is obtained when one of $a_j$ and $b_j$ has a pricing probability of $1$, and the other has a pricing probability of $5/8$.

If $(V, T_1, \ldots, T_m)$ is a {\small YES}-instance of monotone One-in-Three 3-SAT, we assign a pricing probability of $1/2$ to the selection nodes in $S$ and a pricing probability of $1$ to the selection nodes in $V \setminus S$, where $S$ is a set with exactly one element of each $T_j$. Thus, we have exactly one selection node with pricing probability $1/2$ in each extended triangle and in each $3$-path. Then, we can set the pricing probabilities of the set nodes as above, so that the revenue of each extended triangle is $177/128$ and the revenue of each $3$-path is $3/4$. Thus, the maximum revenue of $G$ is at least
$\frac{177}{128}\,m_3 + \frac{3}{4}\,m_2$.

For the converse, we recall that the edges of $G$ can be partitioned into $m_3$ extended triangles and $m_2$ $3$-paths. Consequently, if the maximum revenue of $G$ is at least $\frac{177}{128}\,m_3 + \frac{3}{4}\,m_2$, each extended triangle contributes exactly $177/128$ and each $3$-path contributes exactly $3/4$ to the revenue of $G$. Thus, by the analysis on their revenue above, each extended triangle and each $3$-path includes exactly one selection node with a pricing probability of $1/2$. Therefore, if we let $S$ consist of the selection nodes with pricing probability $1/2$, we have that $|S \cut T_j| = 1$ for all $j \in \{ 1, \ldots, m \}$. 
 \qed\end{proof}

\section{Influence-and-Exploit Strategies}
\label{s:ie}

An \emph{Influence-and-Exploit} (IE) strategy $\IE(A, p)$ consists of a set of buyers $A$ receiving the product for free and a pricing probability $p$ offered
%
%
to the remaining buyers in $V \setminus A$, who are approached in a random order. We slightly abuse the notation and let $\IE(q, p)$ denote an IE strategy where each buyer is selected in $A$ independently with probability $q$. $\IE(A, p)$ extracts an expected (wrt the random ordering of the exploit set) revenue of:
\begin{equation}\label{eq:ie-revenue}
 \RIE(A, p) = p (1-p) \sum_{i \in V \setminus A} \left(w_{ii}+
          \sum_{j \in A} w_{ji} +
          \sum_{j \in V \setminus A,\ j \neq i} \frac{ p\,w_{ji}}{2} \right)
\end{equation}
Specifically, $\IE(A, p)$ extracts a revenue of $p(1-p)w_{ji}$ from each edge $(j, i)$ with buyer $j$ in the influence set $A$ and buyer $i$ in the exploit set $V \setminus A$. Moreover, $\IE(A, p)$ extracts a revenue of $p^2(1-p)w_{ji}$ from each edge $(j, i)$ with both $j, i$ in the exploit set, if $j$ appears before $i$ in the random order of $V \setminus A$, which happens with probability $1/2$.

The problem of finding the best IE strategy is to compute a subset of buyers $A^\ast$ and a pricing probability $p^\ast$ that extract a maximum revenue of $\RIE(A^\ast, p^\ast)$ from a given social network $G(V, E, w)$.
The following lemma employs a reduction from monotone One-in-Three 3-SAT, and shows that computing the best IE strategy is $\NP$-hard.

\begin{lemma}\label{l:ie-hard}
Let $p \in [1/2, 1)$ be any fixed pricing probability. The problem of finding the best IE strategy with pricing probability $p$ is $\NP$-hard, even for undirected social networks.
\end{lemma}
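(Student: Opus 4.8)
The plan is to follow the same reduction template used in Lemma~\ref{l:maxrev-hard}, namely a reduction from monotone One-in-Three 3-SAT, but now adapted so that the gadget analysis is carried out for $\IE$ strategies with a \emph{fixed} pricing probability $p$ rather than for arbitrary marketing strategies. The key conceptual shift is that an $\IE(A,p)$ strategy has essentially two pricing classes: a buyer is either in the influence set $A$ (effectively $p_i=1$, the product for free) or in the exploit set $V\setminus A$ (effectively offered probability $p$). So the combinatorial decision associated with each selection node is binary (in $A$ or not in $A$), which matches the membership decision $|S\cut T_j|=1$ of One-in-Three 3-SAT even more naturally than the general case.

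First I would reuse the same network construction: a selection node for each item of $V$, an extended triangle for each $3$-item set $T_j$, and a $3$-path for each $2$-item set $T_j$, with all edge weights equal to $1$ and no edges among selection nodes. Because $p$ is now fixed, the revenue $\RIE(A,p)$ from~\eqref{eq:ie-revenue} decomposes additively over the gadgets exactly as before, since the gadgets share only the selection nodes and there are no edges between selection nodes. The heart of the proof is then a local gadget analysis: for each gadget, and for each way of placing its selection node(s) into $A$ or into the exploit set, I would compute the best achievable $\IE$-revenue of that gadget in isolation (optimizing only over which set nodes go into $A$). The goal is to exhibit threshold values $\tau_3(p)$ and $\tau_2(p)$ such that a gadget attains revenue $\tau_3(p)$ (resp. $\tau_2(p)$) if and only if exactly one of its selection nodes is placed in the exploit set (equivalently, exactly one is \emph{not} in $A$), and strictly less otherwise. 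Setting the target revenue to $\tau_3(p)\,m_3+\tau_2(p)\,m_2$ then makes the {\small YES}-instance correspondence go through: a valid $S$ yields an $\IE$ strategy (put $V\setminus S$ into $A$, optimize set nodes gadget-by-gadget) meeting the bound, and conversely any $\IE$ strategy meeting the bound must, by additivity and the strict gap, select exactly one selection node per gadget outside $A$, recovering a feasible $S$.

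The main obstacle I expect is the gadget analysis itself, for two reasons. First, unlike Lemma~\ref{l:maxrev-hard}, the selection nodes can no longer take a continuum of pricing probabilities in $[1/2,1]$; they are pinned to the two discrete choices dictated by $\IE$, but the set nodes are \emph{also} restricted to these two choices, so the per-gadget optimization is a finite enumeration rather than a calculus problem. This is good, but it means I must verify, \emph{as a function of the fixed parameter $p$}, that among all $2^{3}$ (for triangles) or $2^{2}$ (for $3$-paths) placements of the set nodes, the revenue is genuinely maximized by a configuration that strictly separates the ``one selection node in the exploit set'' case from the ``zero'' and ``two (or three)'' cases. The delicate point is establishing the \emph{strict} inequality uniformly for every $p\in[1/2,1)$, since the hardness claim is stated for any fixed such $p$; I would need to show the revenue gap does not vanish as $p\to 1$ (where the influence and exploit classes begin to coincide in their marginal effect) and handle the boundary $p=1/2$ separately if needed.

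The second source of difficulty is that the thresholds $\tau_3(p),\tau_2(p)$ are now explicit polynomials in $p$ (arising from the $p(1-p)$, $p^2(1-p)$, and free-edge contributions in~\eqref{eq:ie-revenue}), so the target revenue bound is parameterized by $p$ rather than a single rational number. I would present the optimal gadget configurations symbolically and argue the separation through the sign of the relevant revenue differences, treating these as quadratics/cubics in $p$ whose positivity on $[1/2,1)$ is checked directly. Once the strict per-gadget separation is secured, the reduction's correctness and its polynomial-time computability are routine, mirroring the argument in Lemma~\ref{l:maxrev-hard}. For undirected networks I would additionally invoke Lemma~\ref{l:ordering} if any ordering subtlety arises, but since $\IE$ already fixes a random exploit order and the revenue~\eqref{eq:ie-revenue} is order-independent in expectation, the ordering concern does not even enter here.
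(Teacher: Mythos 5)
Your proposal is correct and follows essentially the same route as the paper: a reduction from monotone One-in-Three 3-SAT in which the edge set decomposes over gadgets and the per-gadget IE revenue is maximized if and only if exactly one item of each set lies on the designated side, with a strict gap for every fixed $p\in[1/2,1)$. The only difference is that the paper uses lighter gadgets than the ones you recycle from Lemma~\ref{l:maxrev-hard} --- a set-triangle directly on the three items and a set-edge of weight $2+p$ for two-item sets, giving the uniform threshold $p(1-p)(2+p)$ --- whereas your extended triangles and $3$-paths also work (the finite enumeration yields thresholds $p(1-p)(5+p)$ and $3p(1-p)$, with separation margins proportional to $1-p$ on $[1/2,1)$), just with more cases to check.
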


\begin{proof}
We recall that in monotone One-in-Three 3-SAT, we are given a set $V$ of $n$ items  and $m$ subsets $T_1, \ldots, T_m$ of $V$, with $2 \leq |T_j| \leq 3$ for each $j \in \{ 1, \ldots, m \}$. We ask for a subset $S \subset V$ such that $|S \cut T_j| = 1$ for all $j \in \{ 1, \ldots, m \}$.

Given $(V, T_1, \ldots, T_m)$, we construct an undirected social network $G$ on $V$. For each $3$-item set $T_j = \{a, b, c\}$, we create a \emph{set-triangle} on nodes $a$, $b$, and $c$ with $3$ edges of weight $1$. For each $2$-item set $T_j = \{a, b\}$, we add a \emph{set-edge} $\{ a, b \}$ of weight $2+p$, where $p$ is the pricing probability. To avoid multiple appearances of the same edge, we let the weight of each edge be the total weight of its appearances. Namely, if an edge $e$ appears in $k_3$ set-triangles and in $k_2$ set-edges, $e$'s weight is $k_3 + (2+p)k_2$. We observe that for any $p \in [1/2, 1)$, the maximum revenue extracted from any set-triangle and any set-edge is $p(1-p)(2+p)$, by giving the product for free to exactly one of the nodes of the set-triangle (resp. the set-edge).

We next show that $(V, T_1, \ldots, T_m)$ is a {\small YES}-instance of monotone One-in-Three 3-SAT iff there is an influence set $A$ in $G$ such that $\RIE(A, p) \geq m p (1-p) (2+p)$.
If $(V, T_1, \ldots, T_m)$ is a {\small YES}-instance of monotone One-in-Three 3-SAT, we let the influence set $A = S$, where $S$ is a set with exactly one element of each $T_j$. Then, we extract an expected revenue of $p(1-p)(2+p)$ from each set-triangle and each set-edge in $G$, which yields an expected revenue of $mp(1-p)(2+p)$ in total.
For the converse, if there is an influence set $A$ in $G$ such that $\RIE(A, p) \geq m p (1-p) (2+p)$, we let $S = A$. Since $\RIE(A, p) \geq m p (1-p) (2+p)$, and since the edges of $G$ can be partitioned into $m$ set-triangles and set-edges, each with a maximum revenue of at most $p(1-p)(2+p)$, each set-triangle and each set-edge contributes exactly $p (1-p) (2+p)$ to $\RIE(A, p)$. Therefore, for all set-triangles and all set-edges, there is exactly one node in $A$. Thus, we have that $|S \cut T_j| = 1$ for all $j \in \{ 1, \ldots, m \}$.
 \qed\end{proof}

Interestingly, even very simple IE strategies extract a significant fraction of the maximum revenue. For example, for undirected social networks, $\RIE(\emptyset, 2/3) = (4W+6N)/27$, and thus $\IE(\emptyset, 2/3)$ achieves an approximation ratio of $\frac{16}{27} \approx 0.592$. For directed networks, $\RIE(\emptyset, 2/3) = (2W+6N)/27$, and thus $\IE(\emptyset, 2/3)$ achieves an approximation ratio of $\frac{8}{27} \approx 0.296$. In the following, we show that carefully selected IE strategies manage to extract a larger fraction of the maximum revenue.

\smallskip\noindent{\bf Exploiting Large Cuts.}
A natural idea is to exploit the apparent connection between a large cut in the social network and a good IE strategy. For example, in the undirected case, an IE strategy $\IE(q, p)$ is conceptually similar to the randomized $0.5$-approximation algorithm for MAX-CUT, which puts each node in set $A$ with probability $1/2$. However, in addition to a revenue of $p(1-p)w_{ij}$ from each edge $\{ i, j\}$ in the cut $(A, V \setminus A)$, $\IE(q, p)$ extracts a revenue of $p^2(1-p)w_{ij}$ from each edge $\{ i, j \}$ between nodes in the exploit set $V \setminus A$. Thus, to optimize the performance of $\IE(q, p)$, we carefully adjust the probabilities $q$ and $p$ so that $\IE(q, p)$ balances between the two sources of revenue. Hence, we obtain the following:

\begin{proposition}\label{pr:simple-ie}
Let $G(V, E, w)$ be an undirected social network, and let $q = \max\{ 1-\frac{\sqrt{2}(2+\lambda)}{4}, 0\}$, where $\lambda = N/W$. Then, $\IE(q, 2-\sqrt{2})$ approximates the maximum revenue extracted from $G$ within a factor of at least $2\sqrt{2}(2-\sqrt{2})(\sqrt{2}-1) \approx 0.686$.
\end{proposition}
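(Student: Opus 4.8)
The plan is to compute the expected revenue of the randomized strategy $\IE(q,p)$ in closed form, compare it against the upper bound $R^\ast = (W+N)/4$, and then optimize the free parameter $q$ for the fixed pricing probability $p = 2-\sqrt{2}$.

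First I would evaluate $\mathbb{E}[\RIE(q,p)]$, where the expectation is over the random influence set $A$ (each buyer joining $A$ independently with probability $q$) as well as the random exploit ordering. By linearity of expectation applied to the edge-by-edge description of $\RIE$ following~(\ref{eq:ie-revenue}), there are three contributions: each self-weight $w_{ii}$ is collected iff $i \in V \setminus A$, which occurs with probability $1-q$; each edge $\{i,j\}$ crossing the cut $(A, V\setminus A)$ yields $p(1-p)w_{ij}$ and is cut with probability $2q(1-q)$; and each edge with both endpoints in the exploit set yields $p^2(1-p)w_{ij}$, which happens with probability $(1-q)^2$. Summing over all self-loops ($N = \sum_i w_{ii}$) and all edges ($W = \sum_{i<j} w_{ij}$) gives
\begin{equation*}
 \mathbb{E}[\RIE(q,p)] = p(1-p)(1-q)\bigl[\,N + W\bigl(2q + p(1-q)\bigr)\,\bigr].
\end{equation*}

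Next I would form the approximation ratio. Writing $\lambda = N/W$ and dividing by $R^\ast = W(1+\lambda)/4$, the ratio becomes $\rho = \frac{4p(1-p)(1-q)(\lambda + 2q + p(1-q))}{1+\lambda}$. For fixed $p$, the dependence on $q$ is through $g(q) = (1-q)[(\lambda+p) + (2-p)q]$, a downward parabola whose unconstrained maximizer is $q^\ast = \frac{2-2p-\lambda}{2(2-p)}$; substituting $p = 2-\sqrt{2}$ (so that $2-p = \sqrt{2}$) yields exactly $q^\ast = 1 - \frac{\sqrt{2}(2+\lambda)}{4}$, matching the statement. The key algebraic step is the identity $4(2-p)(\lambda+p) + (2-2p-\lambda)^2 = (\lambda+2)^2$, which collapses the optimum value to $g(q^\ast) = \frac{(\lambda+2)^2}{4(2-p)}$ and hence $\rho(\lambda) = \frac{p(1-p)(\lambda+2)^2}{(2-p)(1+\lambda)}$ in the regime $q^\ast \ge 0$, i.e. $\lambda \le 2\sqrt{2} - 2$.

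Finally I would minimize over $\lambda$. Differentiating shows $\frac{d}{d\lambda}\frac{(\lambda+2)^2}{1+\lambda} = \frac{\lambda(\lambda+2)}{(1+\lambda)^2} \ge 0$, so $\rho$ is increasing on $[0, 2\sqrt{2}-2]$ and is minimized at $\lambda = 0$, where $\rho = \frac{4p(1-p)}{2-p} = 12 - 8\sqrt{2} = 2\sqrt{2}(2-\sqrt{2})(\sqrt{2}-1) \approx 0.686$. In the complementary regime $\lambda > 2\sqrt{2} - 2$ the maximizer $q^\ast$ is negative, so the clamped choice $q = 0$ is used; there $\rho = \frac{4p(1-p)(\lambda+p)}{1+\lambda}$ is again increasing in $\lambda$ (its derivative factor being $\frac{1-p}{(1+\lambda)^2}>0$) and equals a value $\approx 0.751$ at the boundary, so it never drops below the worst-case bound. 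Together these establish the claimed ratio of $2\sqrt{2}(2-\sqrt{2})(\sqrt{2}-1)$. The main obstacle is the clean verification of the collapsing identity and, more importantly, confirming that $\lambda = 0$ is genuinely the worst case across \emph{both} the interior and the clamped regimes: the two-piece structure of the optimal $q$ forces one to check monotonicity separately on each piece and to match them at the boundary.
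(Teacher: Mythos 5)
Your proposal is correct and follows essentially the same route as the paper: compute $\RIE(q,p)=(1-q)p(1-p)(\lambda+2q+p(1-q))W$, divide by $R^\ast=(1+\lambda)W/4$, optimize $q$ for fixed $p=2-\sqrt{2}$, and observe the worst case is $\lambda=0$. Your explicit verification that the ratio is increasing in $\lambda$ on both the interior regime and the clamped $q=0$ regime is a welcome completion of a step the paper only asserts (``attained at $\lambda=0$''), and all your algebra, including the identity $4ab+(b-a)^2=(a+b)^2$ and the boundary value $\approx 0.751$, checks out.
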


\begin{proof}
The proof extends the proof of \cite[Theorem~3.1]{Hart}. We start with calculating the expected (wrt to the random choice of the influence set) revenue of $\IE(q, p)$.
The expected revenue of $\IE(q, p)$ from each loop $\{ i, i \}$ is $(1-q)p(1-p)w_{ii}$. In particular, a revenue of $p(1-p)w_{ii}$ is extracted from $\{i, i\}$  if buyer $i$ is included in the exploit set, which happens with probability $1-q$.
The expected revenue of $\IE(q, p)$ from each edge $\{ i, j \}$, $i < j$, is $(2q(1-q)p(1-p)+(1-q)^2p^2(1-p))w_{ij}$. More specifically, if one of $i$, $j$ is included in the influence set and the other is included in the exploit set, which happens with probability $2q(1-q)$, a revenue of $p(1-p)w_{ij}$ is extracted from edge $\{i, j\}$.
Otherwise, if both $i$ and $j$ are included in the exploit set, which happens with probability $(1-q)^2$, a revenue of $p^2(1-p)w_{ij}$ is extracted from edge $\{i, j\}$ (note that since $\{i, j\}$ is an undirected edge, the order in which $i$ and $j$ are considered in the exploit set is insignificant).
By linearity of expectation, the expected revenue of $\IE(q, p)$ is:
\[ \RIE(q, p) = (1-q) p (1-p) \sum_{i \in V} w_{ii} +
                  (1-q) p (1-p) \sum_{i < j} (2q + p(1-q)) w_{ij} \]
Using that $N = \sum_{i \in V} w_{ii}$ and $W = \sum_{i < j} w_{ij}$, and setting $N = \lambda W$, we obtain that:
\[ \RIE(q, p) = (1-q) p (1-p) (\lambda + 2q + p(1-q)) W \]
Differentiating with respect to $q$, we obtain that the optimal value of $q$ is
\[ q^\ast = \max\left\{ \frac{1-p-\lambda/2}{2-p}, 0 \right\}\]
We recall that $R^\ast = (1+\lambda) W / 4$ is an upper bound on the maximum revenue of $G$. Therefore, the approximation ratio of $\IE(q, p)$ is:
\begin{equation}\label{eq:ie_optimized}
 \frac{4 (1-q) p (1-p)(\lambda + 2q + p(1-q))}{1+\lambda}
\end{equation}
Using $p = 1/2$ and $q = \max\left\{\frac{1-\lambda}{3}, 0\right\}$ in (\ref{eq:ie_optimized}), we obtain the IE strategy of \cite[Theorem~3.1]{Hart}, whose approximation ratio is at least $2/3$, attained at $\lambda = 0$. Assuming small values of $\lambda$, so that $q^\ast > 0$, and differentiating with respect to $p$, we obtain that the best value of $p$ for $\IE(q^\ast, p)$  is $p^\ast = 2 - \sqrt{2}$. Using $p = 2 - \sqrt{2}$ and $q = \max\left\{ 1-\frac{\sqrt{2}(2+\lambda)}{4}, 0\right\}$, we obtain an IE strategy with an approximation ratio of at least $2\sqrt{2}(2-\sqrt{2})(\sqrt{2}-1) \approx 0.686$, attained at $\lambda = 0$.
 \qed\end{proof}

\begin{proposition}\label{pr:simple-ie-directed}
Let $G(V, E, w)$ be a directed social network. Then, $\IE\!\left(1-\frac{\sqrt{2}}{2}, 2-\sqrt{2}\right)$ approximates the maximum revenue of $G$ within a factor of $\sqrt{2}(2-\sqrt{2})(\sqrt{2}-1) \approx 0.343$.
\end{proposition}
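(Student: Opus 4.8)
The plan is to mirror the proof of Proposition~\ref{pr:simple-ie}, computing the expected revenue of $\IE(q,p)$ over the random choice of the influence set (each buyer placed in $A$ independently with probability $q$) by linearity of expectation, and then dividing by the upper bound $R^\ast = (1+\lambda)W/4$, where $\lambda = N/W$. The only genuinely new ingredient compared to the undirected case is that edges now carry an orientation, which halves the contribution of the ``cut'' edges and introduces an extra ordering factor on the ``exploit'' edges.

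Concretely, I would first record the per-object expected contributions. A self-loop $\{i,i\}$ contributes $(1-q)p(1-p)w_{ii}$, since it is exploited exactly when $i$ lands in the exploit set. For a directed edge $(j,i)$, only the event $j \in A$ (not $i \in A$) makes $j$'s influence available to $i$, so the cut contribution is $q(1-q)\,p(1-p)\,w_{ji}$ — a single $q(1-q)$ rather than the $2q(1-q)$ of the symmetric case. When both endpoints land in the exploit set, which happens with probability $(1-q)^2$, buyer $j$ precedes $i$ in the random order with probability $1/2$, yielding an expected contribution of $\tfrac12 (1-q)^2 p^2 (1-p)\,w_{ji}$. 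Summing over $N = \sum_i w_{ii}$ and $W = \sum_{(i,j)\in E} w_{ij}$ and factoring out $(1-q)p(1-p)$ gives
\[
  \RIE(q,p) = (1-q)\,p\,(1-p)\left(\lambda + q + \tfrac12 (1-q)p\right) W .
\]

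Dividing by $R^\ast$ yields an approximation ratio of $\dfrac{4(1-q)p(1-p)\bigl(\lambda + q + \tfrac12(1-q)p\bigr)}{1+\lambda}$, the directed analogue of (\ref{eq:ie_optimized}). Substituting the prescribed $q = 1 - \tfrac{\sqrt2}{2}$ and $p = 2-\sqrt2$ (so that $1-q = \tfrac{\sqrt2}{2}$ and $1-p = \sqrt2 - 1$), a short calculation shows that the $\lambda$-independent quantity $q + \tfrac12(1-q)p$ equals exactly $1/2$. Hence the ratio collapses to $4(1-q)p(1-p)\cdot\dfrac{\lambda + 1/2}{1+\lambda}$.

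The remaining, and really the only delicate, step is to argue that $\lambda = 0$ is the worst case for this fixed choice of $(q,p)$. Since $\dfrac{\lambda + 1/2}{1+\lambda} = 1 - \dfrac{1/2}{1+\lambda}$ is strictly increasing in $\lambda \ge 0$, the ratio is minimized at $\lambda = 0$, where it equals $2(1-q)p(1-p) = \sqrt2\,(2-\sqrt2)(\sqrt2 - 1) \approx 0.343$, as claimed. I would note that, unlike the undirected Proposition~\ref{pr:simple-ie}, here $q$ need not be tuned to $\lambda$: the fixed value $1-\tfrac{\sqrt2}{2}$ already secures the bound for every $\lambda$ precisely because the $\lambda$-profile of the ratio becomes monotone once $q + \tfrac12(1-q)p = 1/2$. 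No serious obstacle arises; the main thing to get right is the bookkeeping of the directed edge contributions (the single $q(1-q)$ factor and the $1/2$ from the random order), which is exactly where the factor-of-two loss relative to the undirected ratio originates.
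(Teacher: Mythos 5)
Your proof is correct and follows essentially the same route as the paper's: compute the expected revenue of $\IE(q,p)$ edge by edge over the random influence set and random exploit order, and compare it to the upper bound on the maximum revenue. The only difference is that the paper simply drops self-loops for directed networks (justified by a footnote in Section~\ref{s:prelim}) and bounds against $W/4$, whereas you retain the $\lambda = N/W$ term and check that $\lambda = 0$ is the worst case via the identity $q + \tfrac12(1-q)p = \tfrac12$ --- a harmless, slightly more self-contained variant of the same argument.
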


\begin{proof}
The proof is similar to the proof of Proposition~\ref{pr:simple-ie}. We recall that for the directed case, we can ignore loops $(i, i)$. Since the social network $G$ is directed, the expected (wrt to the random choice of the influence set and the random order of the exploit set) revenue of $\IE(q, p)$ is:
\begin{eqnarray*}
 \RIE(q, p) & = &
   (1-q) p (1-p) \sum_{(i, j) \in E} (q + p(1-q)/2) w_{ij} \\
 & = & (1-q) p (1-p) (q + p(1-q)/2) W
\end{eqnarray*}
More specifically, if $i$ is included in the influence set and $j$ is included in the exploit set, which happens with probability $q(1-q)$, a revenue of $p(1-p)w_{ij}$ is extracted from each edge $(i, j)$. Furthermore, if both $i$ and $j$ are included in the exploit set $V \setminus A$ and $i$ appears before $j$ in the random order of $V \setminus A$, which happens with probability $(1-q)^2/2$, a revenue of $p^2(1-p)w_{ij}$ is extracted from edge $(i, j)$.
%
%

Using the upper bound of $W / 4$ on the maximum revenue of $G$, we have that the approximation ratio of $\IE(q, p)$ is at least
$4 (1-q) p (1-p)(q + p(1-q)/2)$.
Setting $q = 1/3$ and $p = 1/2$, we obtain the IE strategy of \cite[Theorem~3.1]{Hart}, whose approximation ratio for directed networks is $1/3$. Using $q = 1-\frac{\sqrt{2}}{2}$ and $p = 2 - \sqrt{2}$, we obtain an IE strategy with an approximation ratio of $\sqrt{2}(2-\sqrt{2})(\sqrt{2}-1) \approx 0.343$.
 \qed\end{proof}

\begin{proposition}[Optimality of IE for Bipartite Networks]\label{pr:bipartite}
Let $G(V, E, w)$ be an undirected bipartite social network with $w_{ii} = 0$ for all buyers $i$, and let $(A, V \setminus A)$ be any partition of $V$ into independent sets. Then, $\IE(A, 1/2)$ extracts the maximum revenue of $G$.
\end{proposition}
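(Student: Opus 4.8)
The plan is to evaluate $\RIE(A, 1/2)$ directly from (\ref{eq:ie-revenue}) and show that it meets the upper bound $R^\ast = (W+N)/4$ on the maximum revenue. The crucial structural fact I would use is that, since $(A, V \setminus A)$ partitions $V$ into independent sets, every edge of $G$ crosses the cut: there are no edges inside $A$ and none inside $V \setminus A$. This is exactly where bipartiteness enters.

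First I would compute the revenue of the strategy. Setting $p = 1/2$ in (\ref{eq:ie-revenue}), and using both that $w_{ii} = 0$ for all $i$ and that $w_{ji} = 0$ whenever $j, i \in V \setminus A$ (independence of the exploit set), the loop term and the within-exploit-set term in each summand vanish, leaving
\[ \RIE(A, 1/2) = \frac{1}{4} \sum_{i \in V \setminus A} \sum_{j \in A} w_{ji}. \]
Since every edge has exactly one endpoint in $A$ and one in $V \setminus A$, the double sum counts each edge of $G$ exactly once, and therefore equals $W$. Hence $\RIE(A, 1/2) = W/4$.

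Next I would invoke the upper bound. Because $w_{ii} = 0$ for every buyer, we have $N = \sum_{i} w_{ii} = 0$, so $R^\ast = (W+N)/4 = W/4$. As $R^\ast$ is an upper bound on the maximum revenue extractable from $G$, and $\IE(A, 1/2)$ already achieves $\RIE(A, 1/2) = W/4 = R^\ast$, the strategy $\IE(A, 1/2)$ must be revenue-optimal, which is the claim.

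There is essentially no obstacle in this argument; the entire content lies in the observation that on a bipartite network the influence step makes every edge a cut edge, so at $p = 1/2$ the strategy collects $p(1-p)w_{ij} = w_{ij}/4$ from each edge simultaneously, matching the myopic per-edge bound that defines $R^\ast$. The one point I would be careful to note is that the statement asserts optimality for \emph{any} partition of $V$ into independent sets, not merely a fixed bipartition; but this follows at once, since the computation above used only the property that all edges cross the cut, which holds for every such partition.
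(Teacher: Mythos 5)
Your proof is correct and follows essentially the same route as the paper's: both arguments rest on the observation that every edge crosses the cut $(A, V\setminus A)$, so $\IE(A,1/2)$ extracts the myopic revenue $w_{ij}/4$ from each edge, matching the upper bound $R^\ast = W/4$. You merely spell out the substitution into (\ref{eq:ie-revenue}) and the vanishing of $N$ more explicitly than the paper does.
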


\begin{proof}
Since all edges of $G$ are between buyers in the influence set $A$ and buyers in the exploit set $V \setminus A$, $\IE(A, 1/2)$ extracts the myopic revenue of $w_{ij}/4$ from any edge $\{ i, j \} \in E$. Therefore, $\IE(A, 1/2)$ is an optimal strategy.
 \qed\end{proof}

\section{On the Efficiency of Influence-and-Exploit Strategies}

Next, we show that the best IE strategy, which is $\NP$-hard to compute, manages to extract a significant fraction of the maximum revenue.

\begin{theorem}\label{th:bestIE-undir}
For any undirected social network, there is an IE strategy with pricing probability $0.586$ whose revenue is at least $0.9111$ times the maximum revenue.
\end{theorem}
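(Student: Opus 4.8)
The plan is to start from a revenue-maximizing strategy $(\vec{\pi}^\ast, \vec{p}^\ast)$ and to obtain the claimed IE strategy by randomized rounding of the optimal pricing probabilities $\vec{p}^\ast$. By Lemma~\ref{l:ordering} I may assume that $\vec{\pi}^\ast$ visits the buyers in non-increasing order of $p_i^\ast$, so that in (\ref{eq:rev}) every undirected edge is charged only through its lower-probability endpoint. Hence, writing $a=\max\{p_i^\ast,p_j^\ast\}$ and $b=\min\{p_i^\ast,p_j^\ast\}$, the edge $\{i,j\}$ contributes exactly $a\,b\,(1-b)\,w_{ij}$ to $R(\vec{\pi}^\ast,\vec{p}^\ast)$, while each loop $\{i,i\}$ contributes $p_i^\ast(1-p_i^\ast)\,w_{ii}$, which is the $a=1$ instance of the same expression if we regard the loop as an edge of weight $w_{ii}$ to a virtual buyer of pricing probability $1$. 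Thus the maximum revenue decomposes into a sum of per-edge terms.

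Fix the exploit pricing probability $p=0.586$. I would place each buyer $i$ independently in the exploit set $V\setminus A$ with probability $f(p_i^\ast)$, and in the influence set $A$ otherwise, for a non-increasing rounding curve $f:[1/2,1]\to[0,1]$ to be determined (the virtual buyers are always placed in $A$). A direct computation from (\ref{eq:ie-revenue}) shows that an edge with parameters $a\ge b$ then yields expected revenue $p(1-p)\big(f(a)+f(b)-(2-p)f(a)f(b)\big)w_{ij}$: it is cut (exactly one endpoint in $A$) with probability $f(a)+f(b)-2f(a)f(b)$, contributing $p(1-p)w_{ij}$, and it keeps both endpoints in the exploit set with probability $f(a)f(b)$, contributing $p^2(1-p)w_{ij}$; a loop contributes $p(1-p)f(p_i^\ast)w_{ii}$.

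By linearity of expectation it then suffices to verify the per-edge bound
\[
 \frac{p(1-p)\big(f(a)+f(b)-(2-p)f(a)f(b)\big)}{a\,b\,(1-b)}\ \ge\ 0.9111
 \qquad (\tfrac12\le b\le a\le 1),
\]
together with its $a=1$, $f(a)=0$ specialization $p(1-p)f(b)\ge 0.9111\,b(1-b)$ for the loops. Summing over all edges and loops gives $\mathbb{E}_A[\RIE(A,p)]\ge 0.9111\,R(\vec{\pi}^\ast,\vec{p}^\ast)$, and averaging yields a single influence set $A$ whose strategy $\IE(A,p)$ attains the bound, which is exactly the asserted existence statement.

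The crux is the choice of $f$ and the verification of this two-variable inequality, and I expect two families of constraints to be simultaneously binding. The first consists of edges whose lower endpoint is priced myopically, i.e. $b=1/2$: there the coefficient $1-(2-p)f(1/2)$ turns negative once $f(1/2)$ is pushed close to $1$, which both forces $f(1/2)\approx 1$ and pins down $f(a)$ for larger $a$, with the clamp $f\le 1$ becoming active for $a$ near $1/2$. The second is the loop family $p(1-p)f(b)\ge 0.9111\,b(1-b)$. Choosing $f$ so that the $b=1/2$ edges are tight and then imposing feasibility of the loop family should determine both the rounding curve and, through a tangency condition between $f$ and the loop bound, the optimal exploit probability; this is how I expect the value $p=0.586$ (and the ratio $0.9111$) to emerge. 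Turning this minimax over $(a,b)$ and over $f$ into a clean, verifiable calculation, and checking that all remaining interior edges carry slack, is the main obstacle.
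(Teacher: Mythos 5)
Your reduction is precisely the one the paper uses: order the buyers by Lemma~\ref{l:ordering}, decompose $R(\vec{\pi}^\ast,\vec{p}^\ast)$ into per-edge terms $a\,b\,(1-b)\,w_{ij}$ with loops as the $a=1$ case, round each buyer independently into influence/exploit with a probability depending on $p_i^\ast$, and bound the ratio term by term; your per-edge expression $p(1-p)\bigl(f(a)+f(b)-(2-p)f(a)f(b)\bigr)w_{ij}$ coincides with the paper's $\hat p(1-\hat p)\bigl(I(x)E(y)+E(x)I(y)+\hat p\,E(x)E(y)\bigr)w_{ij}$ under $f=E=1-I$, and the averaging step at the end is sound.

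The gap is that you stop exactly where the theorem's content begins: the pair $(0.586,\ 0.9111)$ is established only by exhibiting a concrete rounding curve and carrying out the two-variable minimization, and you defer both. The paper does this by setting $\hat p=0.586$ and $I(x)=\alpha(x)(x-0.5)$ with $\alpha$ an explicit four-piece linear function (slopes $5.0$, $3.3$, $3.0$, $3.7$ on $[0.5,0.7]$, $(0.7,0.8]$, $(0.8,0.9]$, $(0.9,1]$), and then verifying numerically that the loop ratio $\hat p(1-\hat p)E(x)/(x(1-x))$ has minimum $\approx 0.9112$ at $x=0.8$ and the edge ratio has minimum $0.9111$ at $(a,b)\approx(0.7924,\,0.5)$, with a near-tie $\approx 0.9112$ at $(a,b)\approx(1.0,\,0.8)$. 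Your qualitative picture of the binding families is consistent with this solution ($f(1/2)=1$, i.e.\ myopically priced buyers are never rounded into the influence set, and the $b=1/2$ edges together with the loops are tight), but a prediction of which constraints bind is not a proof; note also that the simplest choice, a single linear rounding $I(x)=1.43(x-0.5)$, yields only $0.8024$ by the very same computation, so the piecewise-linear refinement is not a cosmetic detail but the source of the last nine percentage points of the ratio. As written, the proposal proves ``there exist $p$ and a constant $c$'' rather than the stated bound.
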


\begin{proof}
We consider an arbitrary undirected social network $G(V, E, w)$, start from an arbitrary pricing probability vector $\vec{p}$, and obtain an IE strategy $\IE(A, \hat{p})$ by applying randomized rounding to $\vec{p}$. We show that for $\hat{p} = 0.586$, the expected (wrt the randomized rounding choices) revenue of $\IE(A, \hat{p})$ is at least $0.9111$ times the revenue extracted from $G$ by the best ordering for $\vec{p}$ (recall that by Lemma~\ref{l:ordering}, the best ordering is to approach the buyers in non-increasing order of their pricing probabilities).

Without loss of generality, we assume that $p_1 \geq p_2 \geq \cdots \geq p_n$, and let $\vec{\pi}$ be the identity permutation. Then,
\(
 R(\vec{\pi}, \vec{p}) =
 \sum_{i \in V} p_i(1-p_i) w_{ii} + \sum_{i < j} p_i p_j (1-p_j) w_{ij} \).

For the IE strategy, we assign each buyer $i$ to the influence set $A$ independently with probability $I(p_i) = \alpha(p_i - 0.5)$, for some appropriate $\alpha \in [0, 2]$, and to the exploit set with probability $E(p_i) = 1 - I(p_i)$. By linearity of expectation, the expected revenue of $\IE(A, \hat{p})$ is:
\[
 \RIE(A, \hat{p}) =  \sum_{i \in V} \hat{p}(1 - \hat{p})E(p_i)w_{ii} +
 \sum_{i < j} \hat{p} (1 - \hat{p}) ( I(p_i) E(p_j) + E(p_i) I(p_j) +
   \hat{p}\,E(p_i) E(p_j)) w_{ij}
\]
Specifically, $\IE(A, \hat{p})$ extracts a revenue of $\hat{p}(1-\hat{p}) w_{ii}$ from edge loop $\{i, i\}$, if $i$ is included in the exploit set. Moreover, $\IE(A, \hat{p})$ extracts a revenue of $\hat{p}(1-\hat{p}) w_{ij}$ from each edge $\{i, j\}$, $i < j$, if one of $i$, $j$ is included in the influence set $A$ and the other is not, and a revenue of $\hat{p}^2(1-\hat{p})w_{ij}$ if both $i$ and $j$ are included in the exploit set $V \setminus A$ (note that the order in which $i$ and $j$ are considered is insignificant).
%

The approximation ratio is derived as the minimum ratio between any pair of terms in $R(\vec{\pi}, \vec{p})$ and $\RIE(A, \hat{p})$ corresponding to the same loop $\{ i, i \}$ or to the same edge $\{ i, j \}$. For a weaker bound, we observe that for $\alpha = 1.43$ and $\hat{p} = 0.586$, both
\begin{equation}\label{eq:bestIE-undir-rat}
 \min_{0.5 \leq x \leq 1}\frac{\hat{p}\,(1 - \hat{p})\,E(x)}{x\,(1-x)}
\ \ \ \mbox{and}\ \ \
 \min_{0.5 \leq y \leq x \leq 1}
 \frac{\hat{p}\,(1 - \hat{p})(I(x)\,E(y) + E(x)\,I(y) + \hat{p}\,E(x)\, E(y))}{x\,y\,(1-y)}
\end{equation}
are at least $0.8024$. More precisely, the former quantity is minimized for $x \approx 0.7104$, for which it becomes $\approx 0.8244$. For any fixed value of $y \in [0.5, 1.0]$, the latter quantity is minimized for $x = 1.0$. The minimum value is $0.8024$ for $x = 1.0$ and $y \approx 0.629$.

For the stronger bound of $0.9111$, we let $\hat{p} = 0.586$, and for each buyer $i$, let the rounding parameter $\alpha(p_i)$ be chosen according to the following piecewise linear function of $p_i$\,:
\[ \alpha(p_i) = \left\{\begin{array}{ll}
5.0\,(p_i - 0.5) & \mbox{if $0.5 \leq p_i \leq 0.7$}\\
1.0 + 3.3\,(p_i - 0.7) & \mbox{if $0.7 < p_i \leq 0.8$}\\
1.33 + 3.0\,(p_i - 0.8) & \mbox{if $0.8 < p_i \leq 0.9$}\\
1.63 + 3.7\,(p_i - 0.9)\ \ \ & \mbox{if $0.9 < p_i \leq 1.0$}\\
\end{array}\right.\]

The quantity on the left of (\ref{eq:bestIE-undir-rat}) is minimized for $x = 0.8$, for which it becomes $\approx 0.9112$. For any fixed $x \in [0.5, 0.949]$, the quantity on the right of (\ref{eq:bestIE-undir-rat}) is minimized for $y = 0.5$. The minimum value is $0.9111$ for $x \approx 0.7924$ and $y = 0.5$. For any $x \in (0.949, 0.983]$, the latter quantity is minimized for $y = 0.7$. The minimum value, over all $x \in (0.949, 0.983]$, is $\approx 0.93$ at $x = 0.983$ and $y = 0.7$. For any fixed $x \in (0.983, 1.0]$, the quantity on the right of (\ref{eq:bestIE-undir-rat}) is minimized for some $y \in [0.7, 0.8]$. Moreover, for all $y \in [0.7, 0.8]$, this quantity is minimized for $x = 1.0$. The minimum value is $\approx 0.9112$ at $x = 1.0$ and $y \approx 0.8$.
 \qed\end{proof}

\begin{theorem}\label{th:bestIE-dir}
For any directed social network, there is an IE strategy with pricing probability $2/3$ whose expected revenue is at least $0.55289$ times the maximum revenue.
\end{theorem}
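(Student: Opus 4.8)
The plan is to mirror the randomized-rounding argument of Theorem~\ref{th:bestIE-undir}, now adapted to directed edges and to the pricing probability $\hat p = 2/3$. I would fix an \emph{optimal} marketing strategy $(\vec\pi^\ast,\vec p^\ast)$, so that $R(\vec\pi^\ast,\vec p^\ast)$ equals the maximum revenue, and build $\IE(A,\hat p)$ by placing each buyer $i$ in the influence set $A$ independently with probability $I(p^\ast_i)$ and in the exploit set with probability $E(p^\ast_i)=1-I(p^\ast_i)$, for a rounding function $I:[1/2,1]\to[0,1]$ to be chosen. Since directed networks carry no loops, the quantity to match is $R(\vec\pi^\ast,\vec p^\ast)=\sum_{(j,i):\,\pi^\ast_j<\pi^\ast_i} p^\ast_j\,p^\ast_i(1-p^\ast_i)\,w_{ji}$.

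By linearity of expectation and the directed accounting behind~(\ref{eq:ie-revenue}), each edge $(j,i)$ contributes $\hat p(1-\hat p)I(p^\ast_j)E(p^\ast_i)w_{ji}$ when $j\in A,\ i\notin A$, and $\tfrac{1}{2}\hat p^2(1-\hat p)E(p^\ast_j)E(p^\ast_i)w_{ji}$ when both endpoints are in the exploit set (the $\tfrac12$ accounting for $j$ preceding $i$). I would then argue edge by edge: a \emph{backward} edge in $\vec\pi^\ast$ contributes $0$ to $R(\vec\pi^\ast,\vec p^\ast)$ but a non-negative amount to $\RIE$, so it only helps; for a \emph{forward} edge it suffices to bound the ratio of its expected IE contribution to $p^\ast_j\,p^\ast_i(1-p^\ast_i)w_{ji}$ from below by $0.55289$. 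Writing $x=p^\ast_j$ (source) and $y=p^\ast_i$ (target), this ratio is
\[
 \hat p(1-\hat p)\,\frac{E(y)\bigl(I(x)+\tfrac{\hat p}{2}E(x)\bigr)}{x\,y(1-y)}
 =\hat p(1-\hat p)\cdot\frac{E(y)}{y(1-y)}\cdot\frac{I(x)+\tfrac{\hat p}{2}E(x)}{x},
\]
which pleasantly \emph{factors} into a function of $y$ alone and a function of $x$ alone. Hence the worst case is $\hat p(1-\hat p)$ times the product of the two one-variable minima $A_{\min}:=\min_y\frac{E(y)}{y(1-y)}$ and $B_{\min}:=\min_x\frac{I(x)+\frac{\hat p}{2}E(x)}{x}$, and the whole problem reduces to choosing $I$ so that $\tfrac{2}{9}A_{\min}B_{\min}\ge 0.55289$.

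A structural point, shared with the undirected case, explains why one compares against the true optimum and not the crude bound $W/4$: the denominator $y(1-y)$ vanishes exactly at $y=1$, precisely where the IE revenue from edges into a ``free'' buyer also vanishes, so these $0/0$ edges are harmless; comparing to $W/4$ would instead extract a spurious $0$ from them and collapse the bound. This is the mechanism by which the directed ratio beats the naive $0.5$ barrier.

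The main obstacle is the design of $I$, since the two factors pull against each other: $\frac{E(y)}{y(1-y)}$ rewards \emph{small} $I$ while $\frac{I(x)+\frac{\hat p}{2}E(x)}{x}$ rewards \emph{large} $I$. A plain linear choice $I(p)=2(p-\tfrac12)$ already gives $A_{\min}=2$ and $B_{\min}=2/3$, hence only $8/27\approx0.296$; the approach's ceiling is $2/3$ (attained formally near $I(\tfrac12)=\tfrac14$), so reaching $0.55289$ requires a carefully tuned, piecewise-linear $I$ together with a short case analysis locating each one-variable minimum, exactly as in Theorem~\ref{th:bestIE-undir}. Once $I$ is fixed and both minima are verified, the termwise inequality yields expected revenue at least $0.55289\,R(\vec\pi^\ast,\vec p^\ast)$ over the rounding, and averaging produces a concrete influence set $A$ meeting the bound.
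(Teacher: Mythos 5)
Your setup is exactly the paper's: round a pricing vector $\vec p$ to an influence set by independent coin flips with bias $I(p_i)$, compare $\RIE(A,2/3)$ edge by edge against $\sum_{(i,j)\in E}p_i p_j(1-p_j)w_{ij}$ (which upper-bounds the revenue of the best ordering for $\vec p$), and exploit the factorization of the per-edge ratio into a function of the source probability times a function of the target probability. The backward-edge remark and the explanation of why one compares against the optimum under $\vec p$ rather than against $W/4$ are both sound. But the proof has a hole precisely where the theorem lives: you never exhibit the rounding function $I$, never locate the one-variable minima, and never verify that $\tfrac{2}{9}A_{\min}B_{\min}\ge 0.55289$. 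You only record that your one attempted choice, $I(p)=2(p-\tfrac12)$, yields $8/27$, and then assert that reaching $0.55289$ "requires a carefully tuned, piecewise-linear $I$ together with a short case analysis." As written, the argument establishes only the $8/27$ bound.

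Moreover, that closing assertion points in the wrong direction. The paper uses the \emph{linear} rounding $I(p)=\alpha(p-\tfrac12)$ with $\alpha=1$ (not $2$). With $\hat p=2/3$ this makes your $x$-factor identically constant: $\frac{I(x)+\frac{\hat p}{2}E(x)}{x}=\frac{(x-\frac12)+\frac13(\frac32-x)}{x}=\frac23$ for all $x\in[\tfrac12,1]$, so the whole bound collapses to the single one-variable minimization $\min_{y\in[1/2,1]}\frac{2(3-2y)}{27\,y(1-y)}$, whose minimum is $\approx 0.55289$, attained at $y=\frac{3-\sqrt3}{2}$. No piecewise-linear tuning and no two-dimensional case analysis are needed in the directed case (unlike Theorem~\ref{th:bestIE-undir}, where the paper does resort to a piecewise-linear slope for the stronger bound). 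To complete your proof you must commit to $I(p)=p-\tfrac12$ (or another concrete choice) and carry out the minimization; everything else in your outline then goes through.
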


\begin{proof}
As before, we consider an arbitrary directed social network $G(V, E, w)$, start from an arbitrary pricing probability vector $\vec{p}$, and obtain an IE strategy $\IE(A, \hat{p})$ by applying randomized rounding to $\vec{p}$. We show that for $\hat{p} = 2/3$, the expected (wrt the randomized rounding choices) revenue of $\IE(A, \hat{p})$ is at least $0.55289$ times the revenue extracted from $G$ under the best ordering for $\vec{p}$ (which ordering is Unique-Games-hard to approximate within a factor less than $0.5$!).

We recall that in the directed case, we can, without loss of generality, ignore loops $(i, i)$. Let $\vec{\pi}$ be the best ordering $\vec{\pi}$ for $\vec{p}$. Then, the maximum revenue extracted from $G$ with pricing probabilities $\vec{p}$ is
\(
 R(\vec{\pi}, \vec{p}) \leq \sum_{(i, j) \in E} p_i p_j (1-p_j) w_{ij} \).

As in the proof of Theorem~\ref{th:bestIE-undir}, we assign each buyer $i$ to the influence set $A$ independently with probability $I(p_i) = \alpha(p_i - 0.5)$, for some $\alpha \in [0, 2]$, and to the exploit set with probability $E(p_i) = 1 - I(p_i)$. By linearity of expectation, the expected (wrt the randomized rounding choices) revenue extracted by $\IE(A, \hat{p})$ is:
\[
 \RIE(A, \hat{p}) =
 \sum_{(i, j) \in E} \hat{p} (1 - \hat{p}) ( I(p_i) E(p_j) +
   0.5\,\hat{p}\,E(p_i) E(p_j)) w_{ij}
\]
Specifically, $\IE(A, \hat{p})$ extracts a revenue of $\hat{p}(1-\hat{p}) w_{ij}$ from each edge $(i, j)$, if $i$ is included in the influence set and $j$ is included in the exploit set, and a revenue of $\hat{p}^2(1-\hat{p})w_{ij}$ if both $i$ and $j$ are included in the exploit set $V \setminus A$ and $i$ appears before $j$ in the random order of $V \setminus A$.
%

The approximation ratio is derived as the minimum ratio between any pair of terms in $R(\vec{\pi}, \vec{p})$ and $\RIE(A, \hat{p})$ corresponding to the same edge $(i, j)$. Thus, we select $\hat{p}$ and $\alpha$ so that the following quantity is maximized:
\[
 \min_{0.5 \leq x, y \leq 1}
 \frac{\hat{p}\,(1 - \hat{p})(I(x)\,E(y) + 0.5\,\hat{p}\,E(x)\,E(y))}{x\,y\,(1-y)}
\]
We observe that for $\hat{p} = 2/3$ and $\alpha = 1.0$, this quantity is simplified to $\min_{y \in [0.5, 1]} \frac{2(3-2y)}{27y(1-y)}$. The minimum value is $\approx 0.55289$ at $y = \frac{3-\sqrt{3}}{2}$.
 \qed\end{proof}

Similarly, we can show that there is an IE strategy with pricing probability $1/2$ whose revenue is at least $0.8857$ (resp. $0.4594$) times the maximum revenue for undirected (resp. directed) networks.

\subsection{On the Approximability of the Maximum Revenue for Directed Networks}

The results of \cite[Lemma~3.2]{Hart} and \cite{GMR08} suggest that given a pricing probability vector $\vec{p}$, it is Unique-Games-hard to compute a vertex ordering $\vec{\pi}$ of a directed network $G$ for which the revenue of $(\vec{\pi}, \vec{p})$ is at least $0.5$ times the maximum revenue of $G$ under $\vec{p}$.
An interesting consequence of Theorem~\ref{th:bestIE-dir} is that this inapproximability bound of $0.5$ does not apply to revenue maximization in the Uniform Additive Model. In particular, given a pricing probability vector $\vec{p}$, Theorem~\ref{th:bestIE-dir} constructs, in linear time, an IE strategy with an expected revenue of at least $0.55289$ times the maximum revenue of $G$ under $\vec{p}$. This does not contradict the results of \cite{Hart,GMR08}, because the pricing probabilities of the IE strategy are different from $\vec{p}$.
Moreover, in the Uniform Additive Model, different acyclic (sub)graphs (equivalently, different vertex orderings) allow for a different fraction of their edge weight to be translated into revenue (for an example, see Section~\ref{app:approximability}, in the Appendix), while in the reduction of \cite[Lemma~3.2]{Hart}, the weight of each edge in an acyclic subgraph is equal to its revenue.
Thus, although the IE strategy of Theorem~\ref{th:bestIE-dir} is $0.55289$-approximate with respect to the maximum revenue of $G$ under $\vec{p}$, its vertex ordering combined with $\vec{p}$ may generate a revenue of less than $0.5$ times the maximum revenue of $G$ under $\vec{p}$.
In fact, based on Theorem~\ref{th:bestIE-dir}, we obtain, in Section~\ref{s:sdp}, a polynomial-time algorithm that approximates the maximum revenue of a directed network $G$ within a factor of $0.5011$.

The following propositions establish a pair of inapproximabity results for revenue maximization in the Uniform Additive Model.


\begin{proposition}\label{pr:approx-t1}
Assuming the Unique Games conjecture, it is $\NP$-hard to compute an IE strategy with pricing probability $2/3$ that approximates within a factor greater than $3/4$ the maximum revenue of a directed social network in the Uniform Additive Model.
\end{proposition}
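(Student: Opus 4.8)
The plan is to prove the bound by a gap-preserving reduction from a Unique-Games-hard directed-cut problem, exploiting the fact that an IE strategy with $p=2/3$ is a disguised MAX-DICUT objective. First I would rewrite (\ref{eq:ie-revenue}) at $p=2/3$. Since $p(1-p)=\tfrac29$ and $\tfrac12 p^2(1-p)=\tfrac{2}{27}$, the strategy $\IE(A,2/3)$ extracts $\tfrac29 w_{ji}$ from every edge $(j,i)$ crossing from the influence set $A$ to the exploit set $\overline A=V\setminus A$, extracts $\tfrac{2}{27}w_{ji}$ in expectation from every edge with both endpoints in $\overline A$, and nothing from edges entering $A$. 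Hence
\[
 \RIE(A,2/3)=\tfrac{2}{27}\bigl(3\,w(A\to\overline A)+w(\overline A\to\overline A)\bigr),
\]
so computing the best such strategy is exactly a weighted MAX-DICUT-type maximization over partitions $(A,\overline A)$. This already isolates where the $3/4$ should come from: a freely-priced, well-ordered strategy realizes the myopic value $\tfrac14$ per unit of edge weight, whereas a fixed price of $2/3$ realizes only $\tfrac29$ on a cut edge, a loss of $\tfrac{2/9}{1/4}=\tfrac89$; combined with a MAX-DICUT-type gap of $\tfrac{27}{32}$ this multiplies out to $\tfrac{27}{32}\cdot\tfrac89=\tfrac34$.

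Next I would build a small directed gadget on which the optimal marketing strategy $(\vec\pi^\ast,\vec p^\ast)$ can be evaluated exactly by finite case analysis over orderings and pricing probabilities (as is done for the extended triangle and the $3$-path in the proof of Lemma~\ref{l:maxrev-hard}), and on which every partition satisfies $\RIE(A,2/3)\le \tfrac34\,R(\vec\pi^\ast,\vec p^\ast)$. I would then glue copies of this gadget along an instance of a Unique-Games-hard problem (MAX-DICUT, or Unique Games itself through its dictatorship-test interface), so that a good labelling induces a partition $A$ whose $\IE(A,2/3)$ revenue matches the maximum revenue up to $o(1)$ (completeness), while no labelling, and hence no partition, pushes any $\IE(A,2/3)$ above $\tfrac34$ of the maximum revenue (soundness). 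Since the underlying gap problem is UG-hard to decide, producing an $\IE(\cdot,2/3)$ strategy that beats $\tfrac34\cdot\mathrm{MaxRev}$ is UG-hard.

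The main obstacle is the two-sided control of the maximum revenue itself. Unlike $\RIE(A,2/3)$, which is the single fixed-price cut objective above, $R(\vec\pi^\ast,\vec p^\ast)$ optimizes jointly over all orderings and all continuous pricing vectors in $[1/2,1]^V$; so I must supply a tight \emph{upper} bound on it in the soundness case (so that the $\tfrac34$ cap is genuine and not an artifact of the loose bound $R^\ast=W/4$) together with a matching \emph{lower} bound in the completeness case. Reconciling these two very different optimizations through one gadget, so that the induced gap is exactly $\tfrac34$ and is preserved under the Unique-Games amplification, is the crux; the finite per-gadget revenue computations in the style of Lemma~\ref{l:maxrev-hard} are the main tool I would rely on to pin the maximum revenue down on both sides.
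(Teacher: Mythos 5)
Your proposal is not a proof: it is a plan whose hardest step you explicitly leave unresolved, and that step is precisely where the content lies. You correctly reduce $\RIE(A,2/3)$ to a weighted MAX-DICUT-type objective and you correctly sense that $3/4 = \tfrac{27}{32}\cdot\tfrac{8}{9}$, but the gadget you would need --- one on which every partition satisfies $\RIE(A,2/3)\le\tfrac34 R(\vec{\pi}^\ast,\vec{p}^\ast)$, which composes with a Unique-Games dictatorship test, and for which you can certify a tight \emph{upper} bound on the unrestricted maximum revenue in the soundness case --- is never exhibited. As you yourself note, upper-bounding $R(\vec{\pi}^\ast,\vec{p}^\ast)$ over all orderings and all continuous pricing vectors is the crux; without it the claimed gap of $3/4$ is unsubstantiated, so the argument does not go through as written.

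The paper's proof shows that this entire difficulty can be sidestepped: no gadget, no dictatorship test, and no upper bound on the maximum revenue are needed. It reduces from Maximum Acyclic Subgraph, whose UG-hardness threshold of $1/2$ is already known \cite{GMR08}. For the completeness side one only needs a \emph{lower} bound on the maximum revenue: ordering the buyers by a maximum acyclic subgraph $\vec{\pi}^\ast$ of weight $W^\ast$ and pricing uniformly at $2/3$ extracts $4W^\ast/27$, so $\mathrm{MaxRev}\ge 4W^\ast/27$. For the soundness side one does not bound the maximum revenue at all; instead one observes that the vertex ordering $\vec{\pi}$ used by any $\IE(A,2/3)$ strategy itself defines an acyclic subgraph, and since each forward edge contributes at most $\tfrac{2}{9}w_{ij}$ to $\RIE(A,2/3)$, that subgraph has weight at least $\tfrac{9}{2}\RIE(A,2/3)\ge\tfrac{9}{2}\cdot\tfrac{4r}{27}W^\ast=\tfrac{2r}{3}W^\ast$. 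An $r$-approximate IE strategy with $r>3/4$ would thus yield a better-than-$1/2$ approximation to Maximum Acyclic Subgraph, contradicting \cite[Theorem~1.1]{GMR08} under the Unique Games conjecture. If you want to salvage your write-up, replace the gadget construction by this two-line accounting; the direct-gadget route would additionally require you to solve the two-sided revenue-control problem you flagged, which the statement does not require.
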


\begin{proof}
Let $G(V, E, w)$ be a directed social network, and
%
%
let $\vec{\pi}^\ast$ be a vertex ordering corresponding to an acyclic subgraph of $G$ with a maximum edge weight of $W^\ast$. Then, approaching the buyers according to $\vec{\pi}^\ast$ and offering a pricing probability of $2/3$ to each of them, we extract a revenue of $4W^\ast/27$. Therefore, the maximum revenue of $G$ is at least $4W^\ast/27$.

Now, we assume an influence set $A$ so that $\IE(A, 2/3)$ approximates the maximum revenue of $G$ within a factor of $r$. Thus, $\RIE(A, 2/3) \geq 4 r W^\ast / 27$. Let $\vec{\pi}$ be the order in which $\IE(A, 2/3)$ approaches the buyers, and let $(i, j)$ be any edge with $\pi_i < \pi_j$, namely, any edge from which $\IE(A, 2/3)$ extracts some revenue. Since the revenue extracted from each such edge $(i, j)$ is at most $2 w_{ij} / 9$, the edge weight of the acyclic subgraph defined by $\vec{\pi}$ is at least
\( \tfrac{9}{2} \RIE(A, 2/3) \geq \tfrac{2 r}{3} W^\ast \).

Hence, given an $r$-approximate $\IE(A, 2/3)$, we can approximate $W^\ast$ within a ratio of $2r/3$. The proposition follows from \cite[Theorem~1.1]{GMR08}, which assumes the Unique Games conjecture and shows that it is $\NP$-hard to approximate $W^\ast$ within a ratio greater than $1/2$.
 \qed\end{proof}

\begin{proposition}\label{pr:approx-t2}
Assuming the Unique Games conjecture, it is $\NP$-hard to approximate within a factor greater than $27/32$ the maximum revenue of a directed social network in the Uniform Additive Model.
\end{proposition}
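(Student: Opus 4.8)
The plan is to reuse the template of Proposition~\ref{pr:approx-t1}, but now play a pricing-independent \emph{upper} bound of $w_{ij}/4$ on the revenue extractable from a single forward edge against the same $2/3$-pricing \emph{lower} bound on the maximum revenue. As before, let $\vec{\pi}^\ast$ be a vertex ordering realizing an acyclic subgraph of maximum edge weight $W^\ast$, and let $R^{\mathrm{opt}}$ denote the maximum revenue of $G$. Approaching the buyers in the order $\vec{\pi}^\ast$ and offering each of them a pricing probability of $2/3$ extracts $\frac{2}{3}\cdot\frac{2}{3}\cdot\frac{1}{3}\,w_{ij} = \frac{4}{27}\,w_{ij}$ from every edge $(i,j)$ of that subgraph, so $R^{\mathrm{opt}} \geq \frac{4}{27}\,W^\ast$.

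First I would establish the matching per-edge upper bound. For any marketing strategy $(\vec{\pi}, \vec{p})$, recall from (\ref{eq:rev}) that, ignoring loops in the directed case, $R(\vec{\pi},\vec{p}) = \sum_{(i,j):\,\pi_i<\pi_j} p_i\,p_j\,(1-p_j)\,w_{ij}$. Since $p_i \leq 1$ and $p_j(1-p_j) \leq 1/4$ for all $p_i, p_j \in [1/2,1]$, each forward edge contributes at most $w_{ij}/4$ to the revenue. The forward edges of $\vec{\pi}$ (those with $\pi_i<\pi_j$) form an acyclic subgraph, of total weight $W_{\vec{\pi}}$ say; hence $R(\vec{\pi},\vec{p}) \leq W_{\vec{\pi}}/4$, equivalently $W_{\vec{\pi}} \geq 4\,R(\vec{\pi},\vec{p})$.

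The reduction then runs as follows. Suppose some polynomial-time algorithm approximates $R^{\mathrm{opt}}$ within a factor $r$. Applied to $G$, it returns a strategy $(\vec{\pi},\vec{p})$ with $R(\vec{\pi},\vec{p}) \geq r\,R^{\mathrm{opt}} \geq r\cdot\frac{4}{27}\,W^\ast$, by combining $r$-approximation with the lower bound above. Outputting the acyclic subgraph consisting of the forward edges of $\vec{\pi}$ yields weight $W_{\vec{\pi}} \geq 4\,R(\vec{\pi},\vec{p}) \geq \frac{16r}{27}\,W^\ast$, so this is a $\frac{16r}{27}$-approximation for Maximum Acyclic Subgraph. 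By \cite[Theorem~1.1]{GMR08}, under the Unique Games conjecture it is $\NP$-hard to approximate $W^\ast$ within a ratio greater than $1/2$; thus $\frac{16r}{27} \leq \frac{1}{2}$, which forces $r \leq 27/32$ and completes the proof.

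The argument is essentially routine once the two bounds are in place, so I anticipate no serious obstacle. The only point requiring care is the direction and universality of the per-edge inequality: Proposition~\ref{pr:approx-t1} bounded the exploit-step revenue by $2w_{ij}/9$, a quantity specific to $\IE(\cdot,2/3)$, whereas here the bound $w_{ij}/4$ must hold against an \emph{arbitrary} pricing vector $\vec{p}$. This stronger per-edge bound is exactly what upgrades the resulting Maximum-Acyclic-Subgraph approximation from $2r/3$ to $16r/27$, and hence sharpens the inapproximability threshold from $3/4$ to $27/32$.
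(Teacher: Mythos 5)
Your proposal is correct and follows essentially the same route as the paper's proof: the same lower bound $R^{\mathrm{opt}} \geq \frac{4}{27}W^\ast$ via the ordering $\vec{\pi}^\ast$ with pricing probability $2/3$, the same per-forward-edge upper bound of $w_{ij}/4$, and the same conclusion $16r/27 \leq 1/2$ via \cite[Theorem~1.1]{GMR08}. Your explicit justification of the per-edge bound (from $p_i \leq 1$ and $p_j(1-p_j)\leq 1/4$) is a welcome detail the paper leaves implicit.
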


\begin{proof}
The proof is similar to the proof of Proposition~\ref{pr:approx-t1}. Let $G(V, E, w)$ be a directed social network, and let $\vec{\pi}^\ast$ be a vertex ordering corresponding to an acyclic subgraph of $G$ with a maximum edge weight of $W^\ast$. Using $\vec{\pi}^\ast$ and a pricing probability of $2/3$ for all buyers, we obtain that the maximum revenue of $G$ is at least $4W^\ast/27$.

We assume a marketing strategy $(\vec{\pi}, \vec{p})$ that approximates the maximum revenue of $G$ within a factor of $r$. Thus, $R(\vec{\pi}, \vec{p}) \geq 4 r W^\ast / 27$. Let $(i, j)$ be any edge with $\pi_i < \pi_j$, namely, any edge from which $(\vec{\pi}, \vec{p})$ extracts some revenue. Since the revenue extracted from each such edge $(i, j)$ is at most $w_{ij} / 4$, the edge weight of the acyclic subgraph defined by $\vec{\pi}$ is at least
\( 4 R(\vec{\pi}, \vec{p}) \geq \tfrac{16 r}{27} W^\ast \)

Thus, given an $r$-approximate marketing strategy $(\vec{\pi}, \vec{p})$, we can approximate $W^\ast$ within a ratio of $16 r / 27$. Now, the proposition follows from \cite[Theorem~1.1]{GMR08}.
 \qed\end{proof}

\section{Generalized Influence-and-Exploit}
\label{s:ie-gen}

Building on the idea of generating revenue from large cuts between different pricing classes, we obtain a class of generalized IE strategies, which employ a refined partition of buyers in more than two pricing classes. We first analyze the efficiency of generalized IE strategies for undirected networks, and then translate our results to the directed case. The analysis generalizes the proof of Proposition~\ref{pr:simple-ie}.

A generalized IE strategy consists of $K$ pricing classes, for some appropriately large integer $K \geq 2$. Each class $k$, $k = 1, \ldots, K$, is associated with a pricing probability of $p_k = 1 - \frac{k-1}{2(K - 1)}$. Each buyer is assigned to the pricing class $k$ independently with probability $q_k$, where $\sum_{k=1}^K q_k = 1$, and is offered a pricing probability of $p_k$. The buyers are considered in non-increasing order of their pricing probabilities, i.e., the buyers in class $k$ are considered before the buyers in class $k+1$, $k = 1, \ldots, K-1$. The buyers in the same class are considered in random order.
In the following, we let $\IE(\vec{q}, \vec{p})$ denote such a generalized IE strategy, where $\vec{q} = (q_1, \ldots, q_K)$ is the assignment probability vector and $\vec{p} = (p_1, \ldots, p_K)$ is the pricing probability vector.

We proceed to calculate the expected revenue extracted by the generalized IE strategy $\IE(\vec{q}, \vec{p})$ from an undirected social network $G(V, E, w)$.
The expected revenue of $\IE(\vec{q}, \vec{p})$ from each loop $\{ i, i \}$ is $w_{ii} \sum_{k=1}^K q_k p_k (1-p_k)$. Specifically, for each $k$, buyer $i$ is included in the pricing class $k$ with probability $q_k$, in which case, the revenue extracted from $\{i, i\}$ is $ p_k (1-p_k) w_{ii}$.
The expected revenue of $\IE(\vec{p}, \vec{q})$ from each edge $\{ i, j \}$, $i < j$, is:
\[
  w_{ij} \sum_{k=1}^K q_k p_k (1-p_k)
  \left( q_k p_k + 2\sum_{\ell=1}^{k-1} q_\ell p_\ell \right)
\]
More specifically, for each class $k$, if both $i$, $j$ are included in the pricing class $k$, which happens with probability $q_k^2$, the revenue extracted from $\{i, j\}$ is $p_k^2 (1-p_k)w_{ij} $. Furthermore, for each pair $\ell$, $k$ of pricing classes, $1 \leq \ell < k \leq K$, if either $i$ is included in $\ell$ and $j$ is included in $k$ or the other way around, which happens with probability $2 q_\ell q_k$, the revenue extracted from $\{i, j\}$ is $p_\ell p_k (1-p_k)w_{ij} $.
Using linearity of expectation and setting $N = \sum_{i \in V} w_{ii}$ and $W = \sum_{i < j} w_{ij}$, we obtain that the expected revenue of $\IE(\vec{q}, \vec{p})$ is:
\[
 \RIE(\vec{q}, \vec{p}) = N \sum_{k=1}^K q_k p_k(1-p_k) +
    W \sum_{k=1}^K q_k p_k(1-p_k)
      \left(q_k p_k + 2 \sum_{\ell=1}^{k-1} q_\ell p_\ell \right)
\]
Since $R^\ast = (N+W)/4$ is an upper bound on the maximum revenue of $G$, the approximation ratio of $\IE(\vec{q}, \vec{p})$ is at least:
\begin{equation}\label{eq:ie-gen-ratio}
  \min\!\left\{ 4 \sum_{k=1}^K q_k p_k(1-p_k),\ \
         4 \sum_{k=1}^K q_k p_k(1-p_k)
      \left(q_k p_k + 2 \sum_{\ell=1}^{k-1} q_\ell p_\ell \right) \right\}
\end{equation}
We can now select the assignment probability vector $\vec{q}$ so that (\ref{eq:ie-gen-ratio}) is maximized. We note that with the pricing probability vector $\vec{p}$ fixed, this involves maximizing a quadratic function of $\vec{q}$ over linear constraints. Thus, we obtain the following:

\begin{theorem}\label{th:ie-gen-undir}
For any undirected social network $G$, the generalized IE strategy with $K = 6$ pricing classes and assignment probabilities $\vec{q} = (0.183, 0.075, 0.075, 0.175, 0.261, 0.231)$ approximates the maximum revenue of $G$ within a factor of $0.7032$.
\end{theorem}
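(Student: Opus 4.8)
The plan is to treat this as a verification of the lower bound (\ref{eq:ie-gen-ratio}) for the specific choice of $K$ and $\vec{q}$ stated in the theorem. Recall that the revenue of $\IE(\vec{q}, \vec{p})$ equals $N\,A + W\,B$, where $A = \sum_{k=1}^K q_k p_k(1-p_k)$ is the coefficient of the loop contribution and $B = \sum_{k=1}^K q_k p_k(1-p_k)\bigl(q_k p_k + 2\sum_{\ell=1}^{k-1} q_\ell p_\ell\bigr)$ is the coefficient of the edge contribution. Dividing by the upper bound $R^\ast = (N+W)/4$, the approximation ratio $\frac{4(N A + W B)}{N+W}$ is a convex combination of $4A$ and $4B$, and hence is at least $\min\{4A, 4B\}$, independently of the unknown ratio $N/W$. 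So it suffices to exhibit a $\vec{q}$ making both $4A$ and $4B$ at least $0.7032$.

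First I would fix $K=6$ and compute the pricing probabilities from $p_k = 1 - \frac{k-1}{2(K-1)}$, obtaining $\vec{p} = (1, 0.9, 0.8, 0.7, 0.6, 0.5)$ and the corresponding products $p_k(1-p_k) = (0, 0.09, 0.16, 0.21, 0.24, 0.25)$. Substituting the stated $\vec{q} = (0.183, 0.075, 0.075, 0.175, 0.261, 0.231)$ into $A$ gives a weighted sum that I would check evaluates to roughly $0.1759$, so that $4A \approx 0.7036 \geq 0.7032$.

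Next I would evaluate $B$. Here the only bookkeeping is the running partial sums $\sigma_k = \sum_{\ell=1}^{k} q_\ell p_\ell$; once these are tabulated, each edge factor $q_k p_k + 2\sigma_{k-1}$ is immediate, and $B = \sum_k q_k p_k(1-p_k)(q_k p_k + 2\sigma_{k-1})$. I expect this to come out to about $0.1758$, giving $4B \approx 0.7032$, which meets the required bound. Since both $4A$ and $4B$ clear $0.7032$, the claimed ratio follows.

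The genuinely nontrivial part is not the verification but the selection of $K$ and $\vec{q}$: as noted before the theorem, with $\vec{p}$ fixed the task of maximizing (\ref{eq:ie-gen-ratio}) is a max-min problem over the simplex whose edge term $B$ is a quadratic form in $\vec{q}$. I would obtain the stated vector by solving this numerically --- for instance, by introducing a scalar $t$ and maximizing $t$ subject to the linear constraint $4A \geq t$, the quadratic constraint $4B \geq t$, the normalization $\sum_k q_k = 1$, and $q_k \geq 0$ --- and then reporting a feasible point whose value certifies the bound. The main obstacle, then, is that the optimum balances the linear loop term $A$ against the quadratic edge term $B$, so that neither the all-free strategy nor any single pricing probability is optimal, and the balancing point must be located by optimization rather than in closed form.
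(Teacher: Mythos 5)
Your proposal is correct and follows essentially the same route as the paper: the paper derives the revenue formula $\RIE(\vec{q},\vec{p}) = N\,A + W\,B$ and the bound (\ref{eq:ie-gen-ratio}) immediately before the theorem, and the theorem itself amounts to plugging in the stated $\vec{q}$, exactly as you do. Your numerical verification checks out ($4A \approx 0.7036$ and $4B \approx 0.70323$, both clearing $0.7032$), and your observation that the ratio is a convex combination of $4A$ and $4B$ is precisely why the min in (\ref{eq:ie-gen-ratio}) suffices independently of $N/W$.
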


We note that the approximation ratio can be improved to $0.706$ by considering more pricing classes. By the same approach, we show that for directed social networks, the approximation ratio of $\IE(\vec{q}, \vec{p})$ is at least half the quantity in (\ref{eq:ie-gen-ratio}). Therefore:

\begin{corollary}\label{cor:ie-gen-dir}
For any directed social network $G$, the generalized IE strategy with $K = 6$ pricing classes and assignment probabilities $\vec{q} = (0.183, 0.075, 0.075, 0.175, 0.261, 0.231)$ approximates the maximum revenue of $G$ within a factor of $0.3516$.
\end{corollary}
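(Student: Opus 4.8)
The plan is to mirror the revenue computation carried out for the undirected case in the discussion preceding Theorem~\ref{th:ie-gen-undir}, keeping the same pricing probabilities $p_k = 1 - \frac{k-1}{2(K-1)}$ and the same assignment vector $\vec{q}$, while accounting for edge orientation exactly as in the passage from Proposition~\ref{pr:simple-ie} to Proposition~\ref{pr:simple-ie-directed}. First I would recompute the expected revenue extracted by $\IE(\vec{q},\vec{p})$ from a single directed edge $(i,j)$, recalling that loops are ignored in the directed case. The only contribution now comes from the event that the influencer $i$ is considered before the influenced buyer $j$. Splitting into cases: if both $i,j$ fall in the same class $k$ (probability $q_k^2$), the random within-class order places $i$ before $j$ with probability $1/2$, contributing $\tfrac12 p_k^2(1-p_k)$; if $i$ lies in a class $\ell$ and $j$ in a class $k$ with $\ell<k$ (probability $q_\ell q_k$), then $i$ always precedes $j$ and the edge contributes $p_\ell p_k(1-p_k)$; and if $i$ lies in a higher-indexed class than $j$, the edge contributes nothing since $i$ is considered after $j$.

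Collecting these terms, the expected revenue from edge $(i,j)$ is
\[
 w_{ij}\sum_{k=1}^{K} q_k p_k(1-p_k)\left(\tfrac{1}{2}\,q_k p_k + \sum_{\ell=1}^{k-1} q_\ell p_\ell\right).
\]
The key observation is that the bracketed factor equals $\tfrac12\bigl(q_k p_k + 2\sum_{\ell=1}^{k-1} q_\ell p_\ell\bigr)$, so the per-edge directed revenue is \emph{exactly half} the per-edge undirected revenue. Summing over all edges with $W = \sum_{(i,j)\in E} w_{ij}$, and using the upper bound $R^\ast = W/4$ for the directed case (which carries no $N$-contribution precisely because loops are discarded), the approximation ratio of $\IE(\vec{q},\vec{p})$ becomes exactly half the edge-term appearing in~(\ref{eq:ie-gen-ratio}), namely $2\sum_{k=1}^{K} q_k p_k(1-p_k)\bigl(q_k p_k + 2\sum_{\ell=1}^{k-1} q_\ell p_\ell\bigr)$.

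It then remains to invoke Theorem~\ref{th:ie-gen-undir}. For the stated $\vec{q}$ with $K=6$, the undirected ratio is the minimum of the loop-term and the edge-term in~(\ref{eq:ie-gen-ratio}) and equals $0.7032$; in particular the edge-term alone is at least $0.7032$. Halving it yields at least $0.3516$, which is the claimed directed bound. The main obstacle is the orientation bookkeeping needed to establish the exact halving: one must verify both that same-class edges incur the within-class factor $1/2$ and that ``backward'' cross-class edges contribute nothing, so that the undirected coefficient $q_k p_k + 2\sum_{\ell<k} q_\ell p_\ell$ is replaced by precisely its half. A secondary point, requiring no extra work, is that the vector $\vec{q}$ optimized for the undirected case need not be optimal for directed networks; since we only need a lower bound on the ratio, reusing the same $\vec{q}$ and reading off the edge-term from Theorem~\ref{th:ie-gen-undir} suffices.
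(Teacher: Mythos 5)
Your proposal is correct and follows essentially the same route as the paper: it computes the per-edge expected revenue of $\IE(\vec{q},\vec{p})$ on a directed edge by the same case split (same-class with within-class factor $1/2$, forward cross-class in full, backward cross-class contributing nothing), observes that the resulting coefficient is exactly half the undirected one, compares against the $W/4$ upper bound, and concludes the ratio is half the edge-term of~(\ref{eq:ie-gen-ratio}), hence at least $0.3516$ for the stated $\vec{q}$. The only cosmetic difference is that you read the value $0.3516$ off Theorem~\ref{th:ie-gen-undir} via the observation that the edge-term is at least the undirected minimum $0.7032$, whereas the paper also substitutes $\vec{q}$ directly into~(\ref{eq:ie-gen-dir-ratio}); both are valid and yield the same bound.
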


\begin{proof}
Similarly to the proof of Theorem~\ref{th:ie-gen-undir}, we calculate the expected (wrt the random partition of buyers into pricing classes and the random order of buyers in the pricing classes) revenue extracted by the generalized IE strategy $\IE(\vec{p}, \vec{q})$ from a directed social network $G(V, E, w)$. We recall that for directed social networks, we can ignore loops $(i, i)$.
The expected revenue of $\IE(\vec{p}, \vec{q})$ from each edge $(i, j)$ is:
\[
  w_{ij} \sum_{k=1}^K q_k p_k (1-p_k)
  \left( \frac{q_k p_k}{2} + \sum_{\ell=1}^{k-1} q_\ell p_\ell \right)
\]
More specifically, for each class $k$, if both $i$, $j$ are included in the pricing class $k$ and $i$ appears before $j$ in the random order of the buyers in $k$, which happens with probability $q_k^2/2$, the revenue extracted from each edge $(i, j)$ is $p_k^2 (1-p_k) w_{ij}$. Furthermore, for each pair $\ell$, $k$ of pricing classes, $1 \leq \ell < k \leq K$, if $i$ is included in $\ell$ and $j$ is included in $k$, which happens with probability $q_\ell q_k$, the revenue extracted from $(i, j)$ is $p_\ell p_k (1-p_k) w_{ij}$.

Using linearity of expectation and setting $W = \sum_{(i, j) \in E} w_{ij}$, we obtain that the expected revenue of $\IE(\vec{q}, \vec{p})$ is:
\[
 \RIE(\vec{q}, \vec{p}) = W \sum_{k=1}^K q_k p_k(1-p_k)
      \left(\frac{q_k p_k}{2} + \sum_{\ell=1}^{k-1} q_\ell p_\ell \right)
\]
Since $W/4$ is an upper bound on the maximum revenue of $G$, the approximation ratio of $\IE(\vec{q}, \vec{p})$ is at least:
\begin{equation}\label{eq:ie-gen-dir-ratio}
         4 \sum_{k=1}^K q_k p_k(1-p_k)
      \left(\frac{q_k p_k}{2} + \sum_{\ell=1}^{k-1} q_\ell p_\ell \right)\,,
\end{equation}
namely at least half of the approximation ratio in the undirected case.

Using $\vec{q} = (0.183, 0.075, 0.075, 0.175, 0.261, 0.231)$ in (\ref{eq:ie-gen-dir-ratio}), we obtain an approximation ratio of at least $0.3516$.
 \qed\end{proof}

\section{Influence-and-Exploit via Semidefinite Programming}
\label{s:sdp}

The main hurdle in obtaining better approximation guarantees for the maximum revenue problem is the loose upper bound of $(N+W)/4$ on the optimal revenue. We do not know how to obtain a stronger upper bound on the maximum revenue. However, in this section, we obtain a strong Semidefinite Programming (SDP) relaxation for the problem of computing the best IE strategy with any given pricing probability $p \in [1/2, 1)$.
Our approach exploits the resemblance between computing the best IE strategy and the problems of MAX-CUT (for undirected networks) and MAX-DICUT (for directed networks), and builds on the elegant approach of Goemans and Williamson \cite{GW95} and Feige and Goemans \cite{FG95}.
Solving the SDP relaxation and using randomized rounding, we obtain, in polynomial time, a good approximation to the best influence set for the given pricing probability $p$. Then, employing the bounds of Theorem~\ref{th:bestIE-undir} and Theorem~\ref{th:bestIE-dir}, we obtain strong approximation guarantees for the maximum revenue problem for both directed and undirected networks.

\smallskip\noindent{\bf Directed Social Networks.}
We start with the case of a directed social network $G(V, E, w)$, which is a bit simpler, because we can ignore loops $(i, i)$ without loss of generality.
We observe that for any given pricing probability $p \in [1/2, 1)$, the problem of computing the best IE strategy $\IE(A, p$) is equivalent to solving the following Quadratic Integer Program:
\begin{align}
 \max & \,\,\tfrac{p(1-p)}{4}\!\!
 \sum_{(i, j) \in E} w_{ij}
 \left( 1+\tfrac{p}{2} + (1-\tfrac{p}{2})y_0y_i -
 (1+\tfrac{p}{2})y_0y_j - (1-\tfrac{p}{2})y_iy_j\right) \tag{Q1}
\end{align}
\vskip-7mm\begin{align}
 \mbox{s.t.} & & y_{i} \in \{ -1, 1 \} & &
 \forall i \in V \union \{ 0 \} \notag
\end{align}
In (Q1), there is a variable $y_i$ for each buyer $i$ and an additional variable $y_0$ denoting the influence set. A buyer $i$ is assigned to the influence set $A$, if $y_i = y_0$, and to the exploit set, otherwise. For each edge $(i, j)$, $1+y_0y_i-y_0y_j-y_iy_j$ is $4$, if $y_i = y_0 = -y_j$ (i.e., if $i$ is assigned to the influence set and $j$ is assigned to the exploit set), and $0$, otherwise. Moreover, $\frac{p}{2}(1-y_0y_i-y_0y_j+y_iy_j)$ is $2p$, if $y_i = y_j = -y_0$ (i.e., if both $i$ and $j$ are assigned to the exploit set), and $0$, otherwise. Therefore, the contribution of each edge $(i, j)$ to the objective function of (Q1) is equal to the revenue extracted from $(i, j)$ by $\IE(A, p)$.

Following the approach of \cite{GW95,FG95}, we relax (Q1) to the following Semidefinite Program, where $v_i\cdot v_j$ denotes the inner product of vectors $v_i$ and $v_j$:
\begin{align}
 \max & \,\,\tfrac{p(1-p)}{4}\!\!
 \sum_{(i, j) \in E} w_{ij}
 \left(1+\tfrac{p}{2} + (1-\tfrac{p}{2})\,v_0\cdot v_i -
 (1+\tfrac{p}{2})\,v_0\cdot v_j - (1-\tfrac{p}{2})\,v_i\cdot v_j\right)
 \tag{S1}
\end{align}
\vskip-7mm\begin{align}
 \mbox{s.t.}
 & & v_i\cdot v_j + v_0\cdot v_i + v_0\cdot v_j \geq -1 \notag \\
 & & \hskip-5cm v_i\cdot v_j - v_0\cdot v_i - v_0\cdot v_j \geq -1 \notag \\
 & & -v_i\cdot v_j - v_0\cdot v_i + v_0\cdot v_j \geq -1 \notag \\
 & & -v_i\cdot v_j + v_0\cdot v_i - v_0\cdot v_j \geq -1 \notag \\
 & & v_i \cdot v_i = 1,\ \ \ \ v_i \in \reals^{n+1}\ \ \ \ \ \ & & \forall i \in V \union \{ 0 \} \notag
\end{align}
We observe that any feasible solution to (Q1) can be translated into a feasible solution to (S1) by setting $v_i = v_0$, if $y_i = y_0$, and $v_i = -v_0$, otherwise. An optimal solution to (S1) can be computed within any precision $\eps$ in time polynomial in $n$ and in $\ln\frac{1}{\eps}$ (see e.g. \cite{Alizadeh}).

Given a directed social network $G(V, E, w)$, a pricing probability $p$, and a parameter $\gamma \in [0, 1]$, the algorithm $\SDP(p, \gamma)$ first computes an optimal solution $v_0, v_1, \ldots, v_n$ to (S1). Then, following \cite{FG95}, the algorithm maps each vector $v_i$ to a rotated vector $v'_i$ which is coplanar with $v_0$ and $v_i$, lies on the same side of $v_0$ as $v_i$, and forms an angle with $v_0$ equal to
\[
 f_\gamma(\theta_i) =
  (1-\gamma)\theta_i + \gamma \pi (1- \cos \theta_i)/2\,,
\]
where $\pi = 3.14\ldots$ and $\theta_i = \arccos(v_0\cdot v_i)$ is the angle of $v_0$ and $v_i$. Finally, the algorithm computes a random vector $r$ uniformly distributed on the unit $(n+1)$-sphere, and assigns each buyer $i$ to the influence set $A$, if $\sgn(v'_i\cdot r) = \sgn(v_0\cdot r)$, and to the exploit set $V \setminus A$, otherwise%
\footnote{Let $\theta'_i = \arccos(v_0\cdot v'_i)$ be the angle of $v_0$ and a rotated vector $v'_i$. To provide some intuition behind the rotation step, we note that $\theta'_i < \theta_i$, if $\theta_i \in (0, \pi/2)$, and $\theta'_i > \theta_i$, if $\theta_i \in (\pi/2, \pi)$. Therefore, applying rotation to $v_i$, the algorithm increases the probability of assigning $i$ to the influence set, if $\theta_i \in (0, \pi/2)$, and the probability of assigning $i$ to the exploit set, if $\theta_i \in (\pi/2, \pi)$. The strength of the rotation's effect depends on the value of $\gamma$ and on the value of $\theta_i$.},
where $\sgn(x) = 1$, if $x \geq 0$, and $-1$, otherwise. We next show that:

\begin{theorem}\label{th:sdp-dir}
For any directed social network $G$, $\SDP(2/3, 0.722)$ approximates the maximum revenue extracted from $G$ by the best IE strategy with pricing probability $2/3$ within a factor of $0.9064$.
\end{theorem}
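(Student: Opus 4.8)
The plan is to follow the analysis framework of Goemans--Williamson \cite{GW95} and Feige--Goemans \cite{FG95}, establishing the guarantee edge by edge. First I would observe that (S1) is a relaxation of (Q1): every integral solution $\vec{y}$ gives a feasible (S1) solution $v_i = y_i v_0$ of the same objective value, so the optimum of (S1) upper-bounds the revenue $\RIE(A^\ast, 2/3)$ of the best $\IE$ strategy with pricing probability $2/3$. It therefore suffices to show that the expected revenue of the solution produced by $\SDP(2/3, 0.722)$ is at least $0.9064$ times the value of (S1), since this is then also at least $0.9064\cdot\RIE(A^\ast,2/3)$.

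Next I would compute the expected rounded revenue contributed by a single edge $(i,j)$. Writing $\theta'_i = \arccos(v_0\cdot v'_i)$ and $\theta'_{ij} = \arccos(v'_i\cdot v'_j)$ for the angles of the rotated vectors, buyer $i$ is placed in $A$ exactly when $v'_i$ and $v_0$ lie on the same side of the random hyperplane $r$, so $\Prob[i\in A] = 1-\theta'_i/\pi$ and $\Prob[i\in V\st A] = \theta'_i/\pi$. The key observation is that $i$ and $j$ land on opposite sides of $v_0$ (equivalently, exactly one of them is placed in $A$) precisely when $r$ separates $v'_i$ from $v'_j$, an event of probability $\theta'_{ij}/\pi$. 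Combining this with the marginals yields the clean closed forms
\[
 \Prob[i\in A,\, j\in V\st A] = \frac{\theta'_{ij}+\theta'_j-\theta'_i}{2\pi},
 \qquad
 \Prob[i\in V\st A,\, j\in V\st A] = \frac{\theta'_i+\theta'_j-\theta'_{ij}}{2\pi}.
\]
Since $\IE(A,2/3)$ extracts $\tfrac{2}{9}w_{ij}$ when $i\in A,\ j\in V\st A$ and $\tfrac{2}{27}w_{ij}$ in expectation when both $i,j$ are in the exploit set (the factor $1/2$ coming from the random order), the rounded revenue of edge $(i,j)$ is a fixed linear function of $\theta'_i,\theta'_j,\theta'_{ij}$.

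I would then express everything through the original inner products. The angles $\theta'_i = f_\gamma(\theta_i)$ and $\theta'_j = f_\gamma(\theta_j)$ are immediate, while $\theta'_{ij}$ is recovered from the coplanar rotation: writing $v'_i = \cos\theta'_i\,v_0 + \sin\theta'_i\,\hat{u}_i$ with $\hat{u}_i$ the unit vector of $\mathrm{span}(v_0,v_i)$ orthogonal to $v_0$ on the side of $v_i$, one has $\hat{u}_i\cdot\hat{u}_j = (v_i\cdot v_j - \cos\theta_i\cos\theta_j)/(\sin\theta_i\sin\theta_j)$ and hence $\cos\theta'_{ij} = \cos\theta'_i\cos\theta'_j + \sin\theta'_i\sin\theta'_j\,(\hat{u}_i\cdot\hat{u}_j)$. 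Dividing the rounded-revenue expression by the contribution of edge $(i,j)$ to the objective of (S1), and using that the global ratio (expected rounded revenue over SDP value) is at least the minimum over all edges of these per-edge ratios, reduces the theorem to a single three-variable optimization: minimize the per-edge ratio over all triples $(v_0\cdot v_i,\ v_0\cdot v_j,\ v_i\cdot v_j)$ feasible for (S1), i.e. satisfying the four triangle-type inequalities together with the positive-semidefiniteness (spherical-triangle) constraint on $v_0,v_i,v_j$. One needs this only on the subregion where the edge's (S1)-contribution is positive, since when it is non-positive the bound holds trivially because the rounded revenue is non-negative.

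The main obstacle is this final minimization. Because the rotation $f_\gamma$ is nonlinear and $\theta'_{ij}$ depends on all three original angles in a non-separable way, the per-edge ratio is neither convex nor closed-form-minimizable, so I would bound its infimum below by $0.9064$ through a careful, computer-assisted search over the (two-dimensional) boundary of the feasible angle region, where the worst case is attained. The value $\gamma = 0.722$ is itself the outcome of optimizing the rotation parameter to maximize this worst-case ratio, exactly as in \cite{FG95}; I expect the binding configurations to occur at extreme triples, e.g. with one vector (anti)aligned with $v_0$, which is precisely where the tightness of the $0.9064$ bound should be confirmed.
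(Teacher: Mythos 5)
Your proposal is correct and follows essentially the same route as the paper: relax (Q1) to (S1), compute the per-edge probabilities $\Prob[i\in A, j\notin A]$ and $\Prob[i,j\notin A]$ via the Goemans--Williamson separation lemma applied to the rotated vectors (yielding exactly the paper's expressions $(\theta'_{ij}-\theta'_i+\theta'_j)/(2\pi)$ and $(\theta'_i+\theta'_j-\theta'_{ij})/(2\pi)$), recover $\theta'_{ij}$ from the coplanar rotation as in Feige--Goemans, and bound the worst per-edge ratio by a numerical three-variable minimization at $\gamma=0.722$. Your added remarks (restricting to edges with positive SDP contribution, and including the spherical-triangle feasibility constraint) are sound refinements but do not change the argument.
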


\begin{proof}
In the following, we let $v_0, v_1, \ldots, v_n$ be an optimal solution to (S1), let $\theta_{ij} = \arccos(v_i \cdot v_j)$ be the angle of any two vectors $v_i$ and $v_j$, and let $\theta_i = \arccos(v_0\cdot v_i)$ be the angle of $v_0$ and any vector $v_i$. Similarly, we let $\theta'_{ij} = \arccos(v'_i \cdot v'_j)$ be the angle of any two rotated vectors $v'_i$ and $v'_j$, and let $\theta'_i = \arccos(v_0\cdot v'_i)$ be the angle of $v_0$ and any rotated vector $v'_i$.
We first calculate the expected revenue extracted from each edge $(i, j) \in E$ by the IE strategy of $\SDP(p, \gamma)$.

\begin{lemma}\label{l:dir-edge-exp-rev}
The IE strategy of $\SDP(p, \gamma)$ extracts from each edge $(i, j)$ an expected revenue of:
\begin{equation}\label{eq:dir-edge-exp-rev}
 w_{ij}\,p(1-p)\,
 \frac{(1-\tfrac{p}{2})\,\theta'_{ij} - (1-\tfrac{p}{2})\,\theta'_i
 + (1+\tfrac{p}{2})\,\theta'_j}{2\pi}
\end{equation}
\end{lemma}

\begin{proof}
%
We first define the following mutually disjoint events:
\begin{alignat*}{2}
& B^{ij}\ & :\ \sgn(v'_i\cdot r) = \sgn(v'_j\cdot r) = \sgn(v_0\cdot r) \\
& B^i_j\ & :\ \sgn(v'_i\cdot r) = \sgn(v_0\cdot r) \neq \sgn(v'_j\cdot r) \\
& B^j_i\ & :\ \sgn(v'_j\cdot r) = \sgn(v_0\cdot r) \neq \sgn(v'_i\cdot r) \\
& B_{ij}\ & :\ \sgn(v'_i\cdot r) = \sgn(v'_j\cdot r) \neq \sgn(v_0\cdot r)
\end{alignat*}
Namely, $B^{ij}$ (resp. $B_{ij}$) is the event that both $i$ and $j$ are assigned to the influence set $A$ (resp. to the exploit set $V \setminus A$), and $B^i_j$ (resp. $B^j_i$) is the event that $i$ (resp. $j$) is assigned to the influence set $A$ and $j$ (resp. $i$) is assigned to the exploit set $V \setminus A$. Also, we let $\Prob[B]$ denote the probability of any event $B$. Then, the expected revenue extracted from each edge $(i, j)$ is:
\begin{equation}\label{eq:dir-exp}
 w_{ij}\,p (1-p) \left(\Prob[B^i_j] + \tfrac{p}{2}\,\Prob[B_{ij}]\right)
\end{equation}

To calculate $\Prob[B^i_j]$ and $\Prob[B_{ij}]$, we use that if $r$ is a vector uniformly distributed on the unit sphere, for any vectors $v_i$, $v_j$ on the unit sphere,
%
\( \Prob[\sgn(v_i\cdot r) \neq \sgn(v_j\cdot r)] =
    \theta_{ij} / \pi \) \cite[Lemma~3.2]{GW95}.
%
For $\Prob[B^i_j]$, we calculate the probability of the event $B^i_j \union B^j_i$ that $i$ and $j$ are in different sets, of the event $B^i_j \union B^{ij}$ that $i$ is in the influence set, and of the event $B^j_i \union B^{ij}$ that $j$ is in the influence set.
\begin{alignat}{2}
 \Prob[B^i_j] + \Prob[B^j_i] &= \Prob[B^i_j \union B^j_i] & =
 \Prob[\sgn(v'_i\cdot r) \neq \sgn(v'_j\cdot r)] & = \theta'_{ij} / \pi
 \label{eq:dir-c1} \\
 \Prob[B^i_j] + \Prob[B^{ij}] & = \Prob[B^i_j \union B^{ij}] & =
 \Prob[\sgn(v'_i\cdot r) = \sgn(v_0\cdot r)] & = 1- \theta'_i / \pi
 \label{eq:dir-c2} \\
 \Prob[B^j_i] + \Prob[B^{ij}] & = \Prob[B^j_i \union B^{ij}] & =
 \Prob[\sgn(v'_j\cdot r) = \sgn(v_0\cdot r)] & = 1-\theta'_j / \pi
 \label{eq:dir-c3}
\end{alignat}
Subtracting (\ref{eq:dir-c3}) from (\ref{eq:dir-c1}) plus (\ref{eq:dir-c2}), we obtain that:
\begin{equation}\label{eq:dir-prob1}
 \Prob[B^i_j] = \tfrac{1}{2\pi}(\theta'_{ij} - \theta'_i + \theta'_j)
\end{equation}

For $\Prob[B_{ij}]$, we also need the probability of the event $B^j_i \union B_{ij}$ that $i$ is in the exploit set, and of the event $B^i_j \union B_{ij}$ that $j$ is in the exploit set.
\begin{alignat}{2}
 \Prob[B^j_i] + \Prob[B_{ij}] & = \Prob[B^j_i \union B_{ij}] & =
 \Prob[\sgn(v'_i\cdot r) \neq \sgn(v_0\cdot r)] & = \theta'_i / \pi
 \label{eq:dir-c4} \\
 \Prob[B^i_j] + \Prob[B_{ij}] & = \Prob[B^i_j \union B_{ij}] & =
 \Prob[\sgn(v'_j\cdot r) \neq \sgn(v_0\cdot r)] & = \theta'_j / \pi
 \label{eq:dir-c5}
\end{alignat}
Subtracting (\ref{eq:dir-c1}) from (\ref{eq:dir-c4}) plus (\ref{eq:dir-c5}), we obtain that:
\begin{equation}\label{eq:dir-prob2}
 \Prob[B_{ij}] = \tfrac{1}{2\pi}(-\theta'_{ij} + \theta'_i + \theta'_j)
\end{equation}

Substituting (\ref{eq:dir-prob1}) and (\ref{eq:dir-prob2}) in (\ref{eq:dir-exp}), we obtain (\ref{eq:dir-edge-exp-rev}), and conclude the proof of the lemma.
 \qed\end{proof}

Since (S1) is a relaxation of the problem of computing the best IE strategy with pricing probability $p$, the revenue of an optimal $\IE(A, p)$ strategy is at most:
\begin{equation}\label{eq:dir-opt}
 \tfrac{p(1-p)}{4}
 \sum_{(i, j) \in E} w_{ij}
 \left( 1+\tfrac{p}{2} + (1-\tfrac{p}{2})\cos\theta_i -
 (1+\tfrac{p}{2})\cos\theta_j - (1-\tfrac{p}{2})\cos\theta_{ij}\right)
\end{equation}
On the other hand, by Lemma~\ref{l:dir-edge-exp-rev} and linearity of expectation, the IE strategy of $\SDP(p, \gamma)$ generates an expected revenue of:
\begin{equation}\label{eq:dir-exp-rev}
 \tfrac{p(1-p)}{2\pi} \sum_{(i, j) \in E} w_{ij}
 \left( (1-\tfrac{p}{2})\,\theta'_{ij} - (1-\tfrac{p}{2})\,\theta'_i
 + (1+\tfrac{p}{2})\,\theta'_j \right)
\end{equation}
We recall that for each $i$, $\theta'_i = f_\gamma(\theta_i)$. Moreover, in \cite[Section~4]{FG95}, it is shown that for each $i$, $j$,
\[ \theta'_{ij} =
   g_\gamma(\theta_{ij}, \theta_i, \theta_j) =
   \arccos\!\left(
   \cos f_\gamma(\theta_i)\,\cos f_\gamma(\theta_j) +
   \frac{\cos\theta_{ij} - \cos\theta_i\,\cos\theta_j}
   {\sin\theta_i\,\sin\theta_j}\,
   \sin f_\gamma(\theta_i)\,\sin f_\gamma(\theta_j) \right)
\]
The approximation ratio of $\SDP(p, \gamma)$ is derived as the minimum ratio of any pair of terms in (\ref{eq:dir-exp-rev}) and (\ref{eq:dir-opt})  corresponding to the same edge $(i, j)$. Thus, the approximation ratio of $\SDP(p, \gamma)$ is:
\begin{align*}
 \rho(p, \gamma) & =
 \frac{2}{\pi} \min_{0 \leq x, y, z \leq \pi}
 \frac{(1-\tfrac{p}{2})\,g_\gamma(x, y, z)
 - (1-\tfrac{p}{2})f_\gamma(y) + (1+\tfrac{p}{2})f_\gamma(z)}
 {1+\tfrac{p}{2} + (1-\tfrac{p}{2})\cos y -
 (1+\tfrac{p}{2})\cos z - (1-\tfrac{p}{2})\cos x}
\end{align*}
\vskip-7mm\begin{alignat*}{2}
 \mbox{s.t.} &\ \ \ \ &
  \cos x + \cos y + \cos z \geq -1 \\
 & & \cos x - \cos y - \cos z \geq -1 \\
 & & - \cos x - \cos y + \cos z \geq -1 \\
 & & - \cos x + \cos y - \cos z \geq -1
\end{alignat*}
It can be shown numerically, that $\rho(2/3, 0.722) \geq 0.9064$.
 \qed\end{proof}

Combining Theorem~\ref{th:sdp-dir} and Theorem~\ref{th:bestIE-dir}, we conclude that:

\begin{theorem}\label{th:approx-dir}
For any directed social network $G$, the IE strategy computed by $\SDP(2/3, 0.722)$ approximates the maximum revenue of $G$ within a factor of $0.5011$.
\end{theorem}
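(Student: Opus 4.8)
The plan is to chain together the two approximation guarantees established earlier, namely Theorem~\ref{th:bestIE-dir} and Theorem~\ref{th:sdp-dir}, both of which are stated for the pricing probability $2/3$. Let $R^\ast$ denote the maximum revenue extracted from $G$ over all marketing strategies, and let $\RIE(A^\ast, 2/3)$ denote the revenue of the best IE strategy with pricing probability $2/3$. The entire argument reduces to multiplying the two ratios through the common intermediate quantity $\RIE(A^\ast, 2/3)$.

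First I would invoke Theorem~\ref{th:bestIE-dir}, which guarantees the existence of an IE strategy with pricing probability $2/3$ whose expected revenue is at least $0.55289\,R^\ast$. Since $\RIE(A^\ast, 2/3)$ is by definition the revenue of the \emph{best} such strategy, we immediately obtain $\RIE(A^\ast, 2/3) \geq 0.55289\,R^\ast$. Next I would invoke Theorem~\ref{th:sdp-dir}, which guarantees that the IE strategy returned by $\SDP(2/3, 0.722)$ extracts at least $0.9064\,\RIE(A^\ast, 2/3)$ in expectation. Composing the two inequalities, the expected revenue of the $\SDP(2/3, 0.722)$ strategy is at least $0.9064 \cdot 0.55289 \cdot R^\ast$, and a direct numerical check gives $0.9064 \cdot 0.55289 \geq 0.5011$, which establishes the claim.

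The only point requiring attention — and it is a compatibility check rather than a genuine obstacle — is that the two theorems must refer to the same intermediate quantity for the composition to be valid. Theorem~\ref{th:bestIE-dir} relates the best IE strategy (with pricing probability $2/3$) to the global maximum revenue $R^\ast$, while Theorem~\ref{th:sdp-dir} relates the output of $\SDP(2/3, 0.722)$ to that very same best IE strategy. Because both guarantees are stated for the \emph{identical} pricing probability $2/3$, the factor $\RIE(A^\ast, 2/3)$ telescopes cleanly and the two ratios multiply, with no loss beyond their product. No further argument is needed, so the proof is a short composition step.

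\end{document}
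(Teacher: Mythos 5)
Your proposal is correct and is exactly the paper's argument: the paper derives Theorem~\ref{th:approx-dir} by the same composition, stating only ``Combining Theorem~\ref{th:sdp-dir} and Theorem~\ref{th:bestIE-dir}, we conclude'' and relying on $0.9064 \cdot 0.55289 \approx 0.50114 \geq 0.5011$. Your compatibility check that both guarantees refer to the same intermediate quantity (the best IE strategy with pricing probability $2/3$) is the only point of substance, and you have it right.
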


\begin{figure}[t]
\begin{minipage}{0.45\textwidth}
\includegraphics[width=\textwidth]{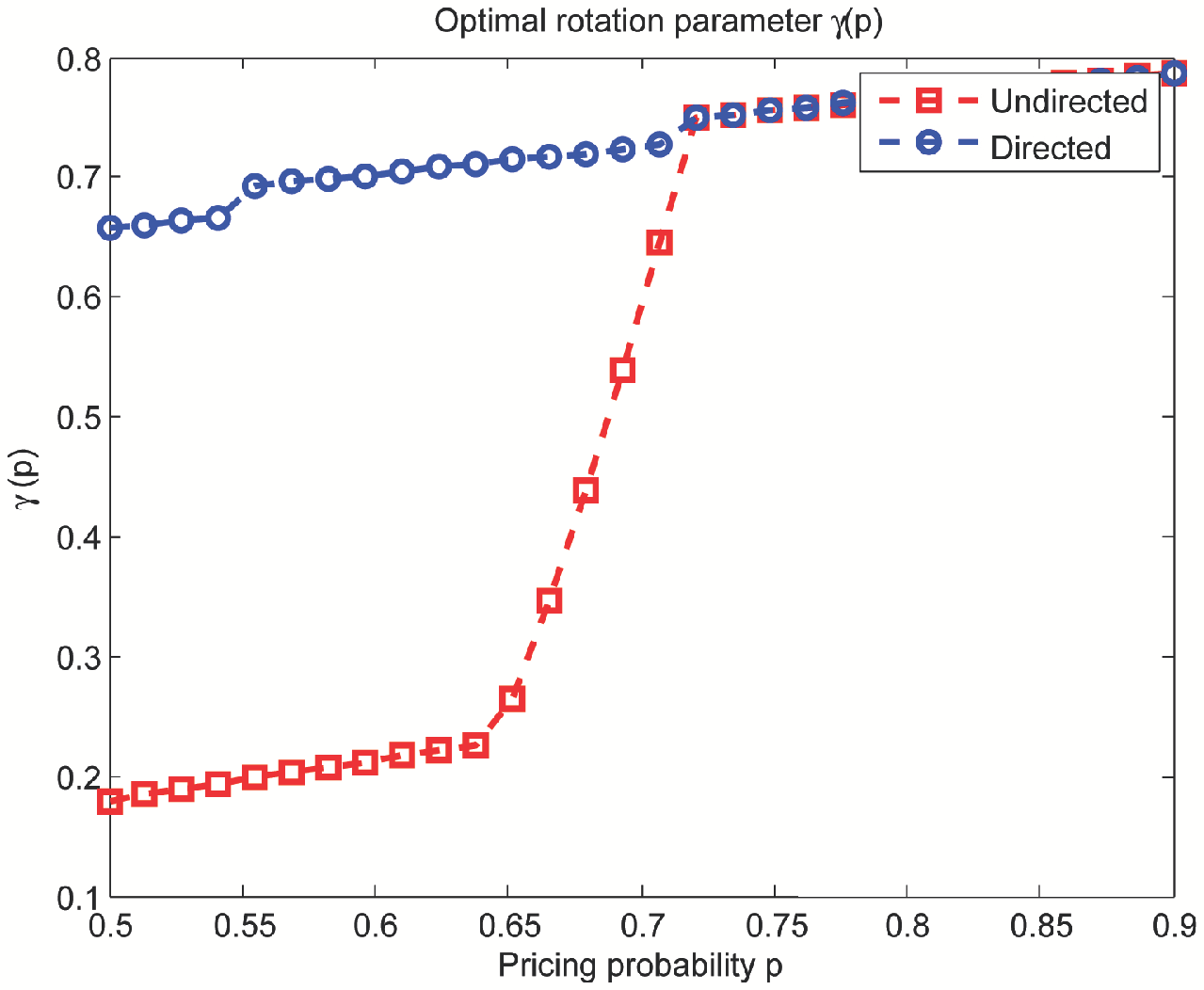}
\end{minipage}\hfill%
\begin{minipage}{0.45\textwidth}
\includegraphics[width=\textwidth]{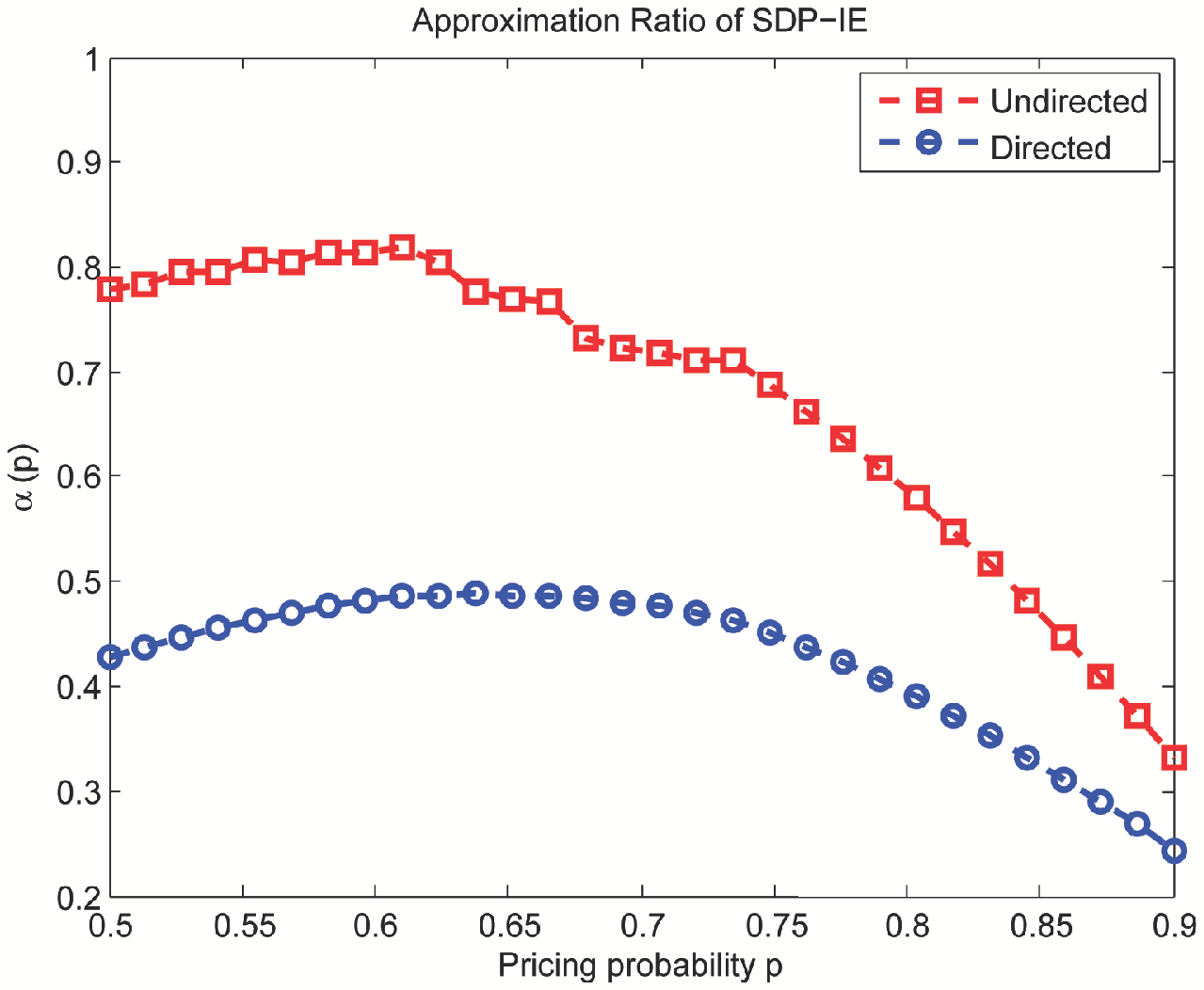}
\end{minipage}\\[1pt]
\begin{minipage}{0.45\textwidth}
\includegraphics[width=\textwidth]{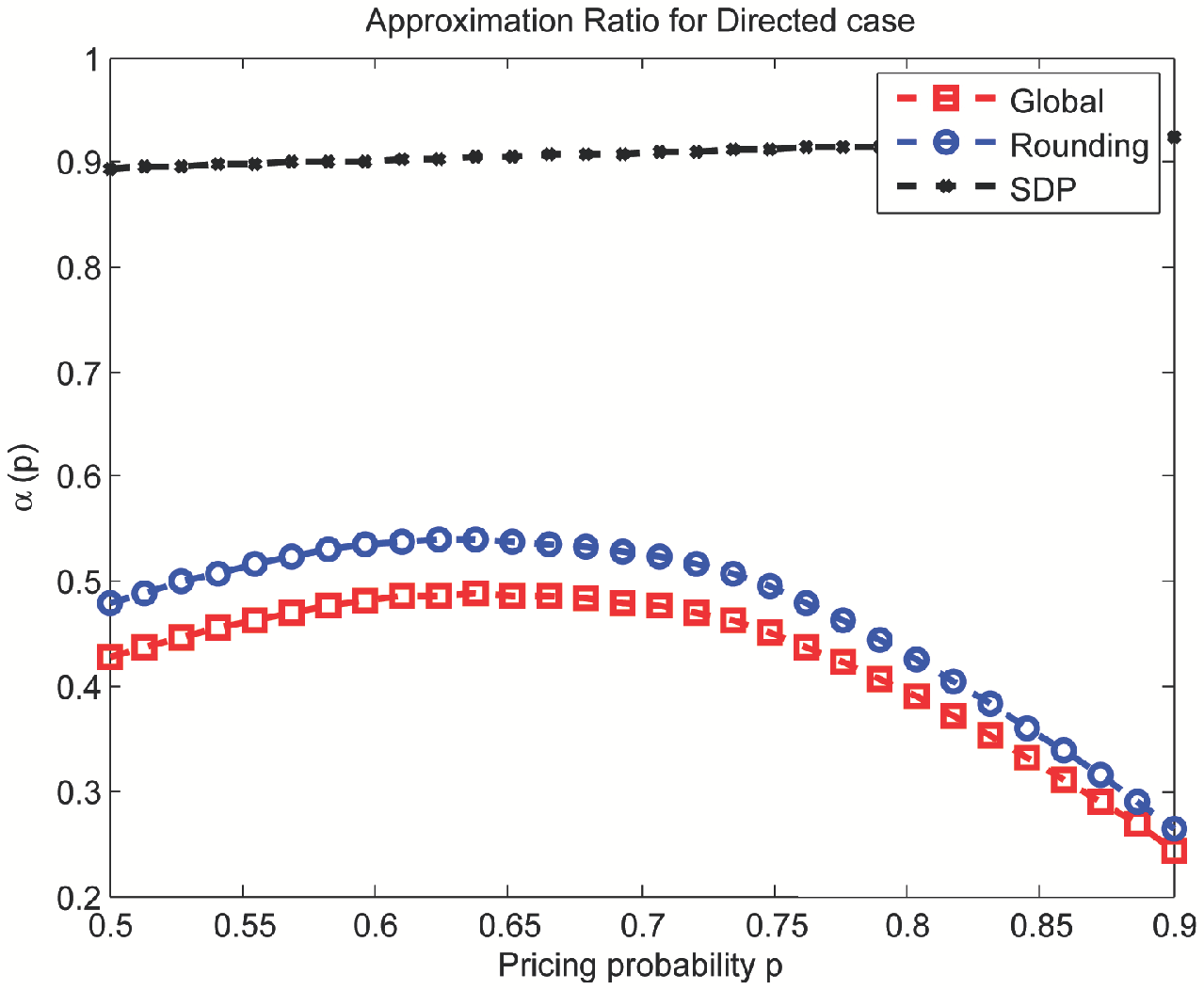}
\end{minipage}\hfill%
\begin{minipage}{0.45\textwidth}
\includegraphics[width=\textwidth]{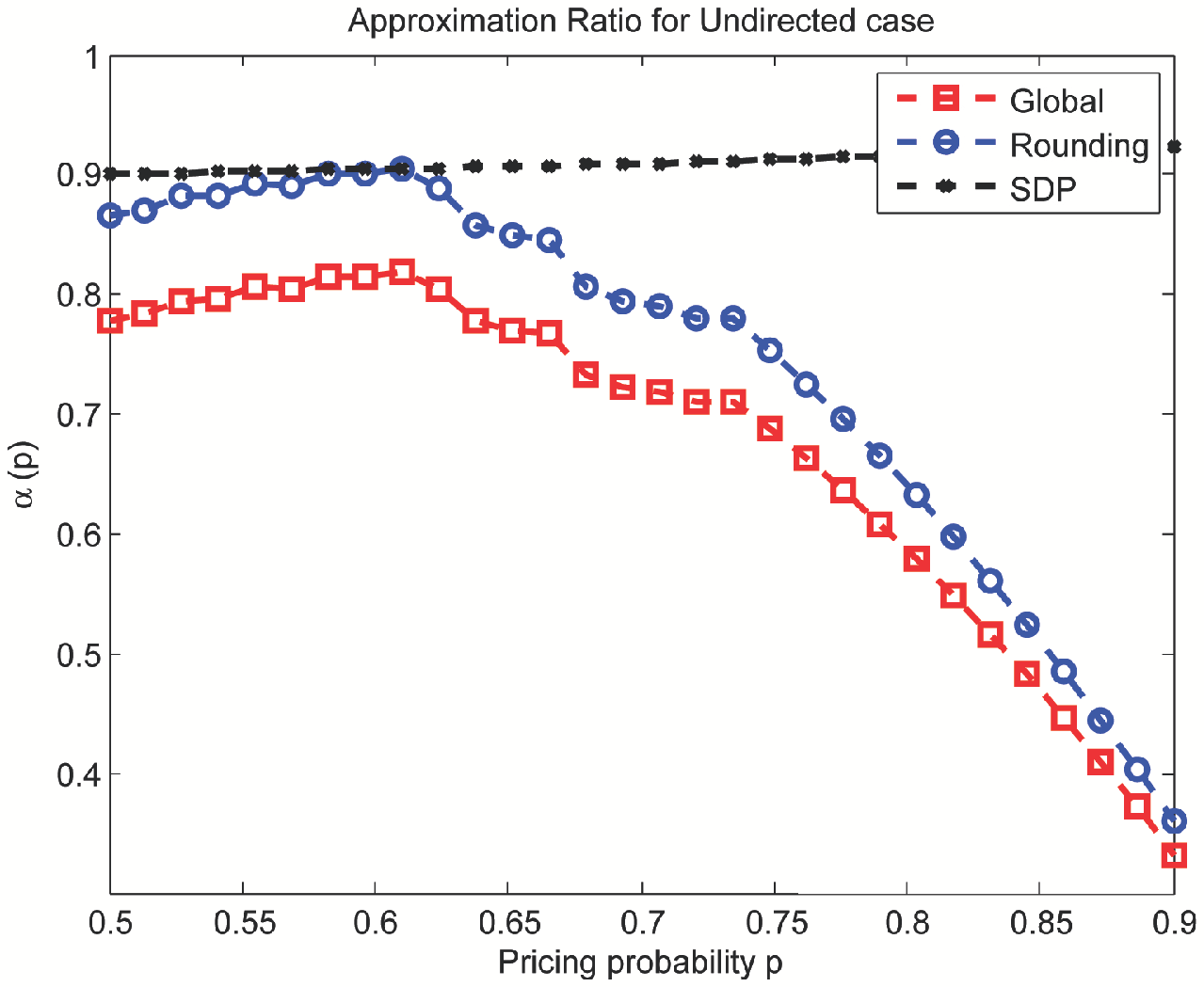}
\end{minipage}
\caption{\label{fig:approx}The approximation ratio of $\SDP(p, \gamma)$ for the revenue of the best IE strategy and for the maximum revenue, as a function of the pricing probability $p$.
The upper left plot shows the best choice of the rotation parameter $\gamma$, as a function of $p$. The blue curve (with circles) shows the best choice of $\gamma$ for directed social networks and the red curve (with squares) for undirected networks. In both cases, the best choice of $\gamma$ increases with $p$.
The upper right plot shows the approximation ratio of $\SDP(p, \gamma)$ for the maximum revenue for directed (blue curve, with circles) and undirected (red curve, with squares) networks.
The lower plots show the approximation ratio of $\SDP(p, \gamma)$ for directed (left plot) and undirected (right plot) networks, as a function of $p$.
In each plot, the upper curve (in black) shows the approximation ratio of $\SDP(p, \gamma)$ for the revenue of the best IE strategy, which increases slowly with $p$.
The blue curve (that with circles) shows the guarantee of Theorem~\ref{th:bestIE-dir} and Theorem~\ref{th:bestIE-undir} on the fraction of the maximum revenue extracted by the best IE strategy.
The red curve (that with squares) shows the approximation ratio of $\SDP(p, \gamma)$ for the maximum revenue.}
\end{figure}

\smallskip\noindent{\bf Undirected Social Networks.}
We apply the same approach to an undirected network $G(V, E, w)$.
For any given pricing probability $p \in [1/2, 1)$, the problem of computing the best IE strategy $\IE(A, p$) for $G$ is equivalent to solving the following Quadratic Integer Program:
\begin{align}
 \max & \,\,
 \tfrac{p(1-p)}{2}
 \sum_{i \in V} w_{ii}\,(1-y_0y_i) +
 \tfrac{p(1-p)}{4}
 \sum_{i < j} w_{ij}
 \left( 2+p - p y_0y_i - p y_0y_j - (2-p)y_iy_j\right) \tag{Q2}
\end{align}
\vskip-7mm\begin{align}
 \mbox{s.t.} & & y_{i} \in \{ -1, 1 \} & &
 \forall i \in V \union \{ 0 \} \notag
\end{align}
In (Q2), there is a variable $y_i$ for each buyer $i$ and an additional variable $y_0$ denoting the influence set. A buyer $i$ is assigned to the influence set $A$, if $y_i = y_0$, and to the exploit set, otherwise.
For each loop $\{ i, i \}$, $1-y_0y_i$ is $2$, if $i$ is assigned to the exploit set, and $0$, otherwise.
For each edge $\{ i, j \}$, $i < j$, $2-2y_iy_j$ is $4$, if $i$ and $j$ are assigned to different sets, and $0$, otherwise.
Also, $p(1-y_0y_i-y_0y_j+y_iy_j)$ is $4p$, if both $i$ and $j$ are assigned to the exploit set, and $0$, otherwise. Therefore, the contribution of each loop $\{i, i\}$ and each edge $\{i, j\}$, $i < j$, to the objective function of (Q2) is equal to the revenue extracted from them by $\IE(A, p)$.
The next step is to relax (Q1) to the following Semidefinite Program:
\begin{align*}
 \max & \,\,
 \tfrac{p(1-p)}{2}
 \sum_{i \in V} w_{ii}\,(1-v_0\cdot v_i) +
 \tfrac{p(1-p)}{4}
 \sum_{i < j} w_{ij}
 \left( 2+p - p\,v_0\cdot v_i - p\,v_0\cdot v_j - (2-p)\,v_i\cdot v_j\right)
\end{align*}
\vskip-7mm\begin{align}
 \mbox{s.t.}
 & & v_i\cdot v_j + v_0\cdot v_i + v_0\cdot v_j \geq -1 \tag{S2} \\
 & & \hskip-5cm v_i\cdot v_j - v_0\cdot v_i - v_0\cdot v_j \geq -1 \notag \\
 & & -v_i\cdot v_j - v_0\cdot v_i + v_0\cdot v_j \geq -1 \notag \\
 & & -v_i\cdot v_j + v_0\cdot v_i - v_0\cdot v_j \geq -1 \notag \\
 & & v_i \cdot v_i = 1,\ \ \ \ v_i \in \reals^{n+1}\ \ \ \ \ \ & & \forall i \in V \union \{ 0 \} \notag
\end{align}

The algorithm is the same as the algorithm for directed networks. Specifically, given an undirected social network $G(V, E, w)$, a pricing probability $p$, and a parameter $\gamma \in [0, 1]$, the algorithm $\SDP(p, \gamma)$ first computes an optimal solution $v_0, v_1, \ldots, v_n$ to (S2). Then, it maps each vector $v_i$ to a rotated vector $v'_i$ which is coplanar with $v_0$ and $v_i$, lies on the same side of $v_0$ as $v_i$, and forms an angle $f_\gamma(\theta_i)$ with $v_0$, where $\theta_i = \arccos(v_0\cdot v_i)$. Finally, the algorithm computes a random vector $r$ uniformly distributed on the unit $(n+1)$-sphere, and assigns each buyer $i$ to the influence set $A$, if $\sgn(v'_i\cdot r) = \sgn(v_0\cdot r)$, and to the exploit set $V \setminus A$, otherwise. We prove that:

\begin{theorem}\label{th:sdp-undir}
For any undirected network $G$, $\SDP(0.586, 0.209)$ approximates the maximum revenue extracted from $G$ by the best IE strategy with pricing probability $0.586$ within a factor of $0.9032$.
\end{theorem}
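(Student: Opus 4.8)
The plan is to mirror the argument for the directed case (Theorem~\ref{th:sdp-dir}), adapting it to account for two differences: the loop terms $w_{ii}$ now contribute to the revenue, and an undirected edge $\{i,j\}$ yields revenue both from the \emph{cut} event (exactly one endpoint in $A$) and from the event that both endpoints land in the exploit set. First I would fix an optimal solution $v_0, v_1, \ldots, v_n$ of (S2), rotate each $v_i$ to $v'_i$ so that it forms angle $\theta'_i = f_\gamma(\theta_i)$ with $v_0$, and round with a uniformly random hyperplane exactly as in $\SDP(p,\gamma)$, assigning $i$ to $A$ iff $\sgn(v'_i\cdot r) = \sgn(v_0\cdot r)$. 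Since (S2) is a relaxation of the integer program (Q2), which by construction encodes the revenue of the best $\IE(A,p)$ strategy, the optimal value of (S2) is an upper bound on the revenue of the best IE strategy with pricing probability $p$.

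Next I would compute, loop by loop and edge by edge, the expected revenue of the rounded strategy. For a loop $\{i,i\}$, buyer $i$ lands in the exploit set with probability $\theta'_i/\pi$, so its expected revenue is $w_{ii}\,p(1-p)\,\theta'_i/\pi$. For an edge $\{i,j\}$, the endpoints fall on opposite sides of the influence/exploit partition with probability $\Prob[\sgn(v'_i\cdot r)\neq\sgn(v'_j\cdot r)] = \theta'_{ij}/\pi$ (contributing $p(1-p)w_{ij}$), and both fall in the exploit set with probability $\tfrac{1}{2\pi}(-\theta'_{ij}+\theta'_i+\theta'_j)$ (contributing $p^2(1-p)w_{ij}$), where $\theta'_{ij} = g_\gamma(\theta_{ij},\theta_i,\theta_j)$ is the rotated-vector angle given by the formula of \cite{FG95}. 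The key point is that these are precisely the quantities $\Prob[B^i_j\union B^j_i]$ and $\Prob[B_{ij}]$ already derived in Lemma~\ref{l:dir-edge-exp-rev}; in particular I reuse (\ref{eq:dir-prob2}) verbatim, so no new probabilistic computation is needed. Combining, the expected revenue from $\{i,j\}$ is $w_{ij}\,p(1-p)\bigl(\tfrac{\theta'_{ij}}{\pi} + \tfrac{p}{2\pi}(-\theta'_{ij}+\theta'_i+\theta'_j)\bigr)$.

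Comparing the expected rounded revenue term-by-term with the matching terms of the objective (S2), the approximation ratio $\rho(p,\gamma)$ is the minimum of two quantities. The loop ratio is
\[
 \min_{0\le y\le\pi}\ \frac{2\,f_\gamma(y)}{\pi\,(1-\cos y)}\,,
\]
a one-dimensional minimization. The edge ratio is
\[
 \frac{2}{\pi}\min\
 \frac{(2-p)\,g_\gamma(x,y,z) + p\,f_\gamma(y) + p\,f_\gamma(z)}
      {2+p - p\cos y - p\cos z - (2-p)\cos x}\,,
\]
minimized over $x=\theta_{ij}$, $y=\theta_i$, $z=\theta_j$ subject to the same four triangle-type constraints on $\cos x, \cos y, \cos z$ that appear in the definition of $\rho(p,\gamma)$ for the directed case (and which are inherited from the constraints of (S2)). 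Because each rounded term is at least $\rho(p,\gamma)$ times the corresponding term of (S2), and the optimal value of (S2) dominates the revenue of the best IE strategy, the rounded strategy recovers at least a $\rho(p,\gamma)$ fraction of the best IE revenue. The final step is to verify numerically that $\rho(0.586, 0.209) \geq 0.9032$.

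I expect the numerical minimization of the edge ratio to be the main obstacle, exactly as in the directed case: the feasible set is a curved three-dimensional region carved out by the four constraints, and the objective mixes the nonlinear rotation functions $f_\gamma$ and $g_\gamma$, so the worst-case angle configuration admits no closed form and must be located by a careful constrained search. The loop ratio, by contrast, is a tractable single-variable minimization, and one checks that at $p=0.586$, $\gamma=0.209$ it does not fall below the edge-ratio bound; hence the claimed factor of $0.9032$ is the binding value of $\rho$.
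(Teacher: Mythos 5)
Your proposal is correct and follows essentially the same route as the paper: it reuses the event probabilities from Lemma~\ref{l:dir-edge-exp-rev} to get the per-loop and per-edge expected revenues (your edge expression simplifies to the paper's $\tfrac{(2-p)\theta'_{ij}+p\theta'_i+p\theta'_j}{2\pi}$ form), compares term-by-term against the (S2) objective, and reduces the ratio to the same two minimizations $\rho_1(\gamma)$ and $\rho_2(p,\gamma)$ verified numerically. The paper likewise finds the edge ratio ($\geq 0.9032$) to be the binding one, with the loop ratio at $\geq 0.9035$.
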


\begin{proof}
We employ the same approach, techniques, and notation as in the proof of Theorem~\ref{th:sdp-dir}. The expected revenue extracted from each loop $\{i, i\}$ is $w_{ii}\,p(1-p)$ times the probability that $i$ is in the exploit set, which is equal to
\( \Prob[\sgn(v'_i\cdot r) \neq \sgn(v_0\cdot r)] = \theta'_i/\pi \).
Therefore, the algorithm extracts an expected revenue of
\( w_{ii}\,p(1-p)\,\theta'_i / \pi \)
from each loop $\{i, i\}$. Next, we calculate the expected revenue extracted from each (undirected) edge $\{i, j\}$, $i < j$, by the IE strategy of $\SDP(p, \gamma)$.

\begin{lemma}\label{l:undir-edge-exp-rev}
$\SDP(p, \gamma)$ extracts from each edge $\{i, j\}$, $i < j$, an expected revenue of:
\[
 w_{ij}\,p(1-p)\,
 \frac{(2-p)\,\theta'_{ij} + p\,\theta'_i + p\,\theta'_j}
 {2\pi}
\]
\end{lemma}

\begin{proof}
Let the events $B^i_j$, $B^j_i$, and $B_{ij}$ be defined as in the proof of Lemma~\ref{l:dir-edge-exp-rev}. In particular, $B^i_j \union B^j_i$ is the event that $i$ and $j$ are in different sets, and $B_{ij}$ is the event that both $i$ and $j$ are in the exploit set. Thus, the expected revenue extracted from edge $\{i, j\}$ is:
\begin{equation}\label{eq:undir-exp}
 w_{ij}\,p (1-p) \left(\Prob[B^i_j \union B^j_i] + p\,\Prob[B_{ij}]\right)
\end{equation}
In the proof of Lemma~\ref{l:dir-edge-exp-rev}, in (\ref{eq:dir-c1}) and (\ref{eq:dir-prob2}) respectively, we show that
\( \Prob[B^i_j \union B^j_i] = \theta'_{ij} / \pi \),
and that
\( \Prob[B_{ij}] = (-\theta'_{ij} + \theta'_i + \theta'_j)/(2\pi) \).
Substituting these in (\ref{eq:undir-exp}), we obtain the lemma.
 \qed\end{proof}

Therefore, by linearity of expectation, the expected revenue of $\SDP(p, \gamma)$ is:
\begin{equation}\label{eq:undir-exp-rev}
 \tfrac{p(1-p)}{\pi} \sum_{i \in V} w_{ii}\,\theta'_i +
 \tfrac{p(1-p)}{2\pi} \sum_{i < j} w_{ij}
 \left( (2-p)\,\theta'_{ij} + p\,\theta'_i + p\,\theta'_j \right)\,,
\end{equation}
where $\theta'_i = f_\gamma(\theta_i)$, for each $i \in V$, and $\theta'_{ij} = g_\gamma(\theta_{ij}, \theta_i, \theta_j)$, for each $i, j \in V$.

On the other hand, since (S2) relaxes the problem of computing the best IE strategy with pricing probability $p$, the revenue of the best $\IE(A, p)$ strategy is at most:
\begin{equation}\label{eq:undir-opt}
 \tfrac{p(1-p)}{2} \sum_{i \in V} w_{ii}(1-\cos\theta_i) +
 \tfrac{p(1-p)}{4}
 \sum_{i < j} w_{ij} \left(2+p - p\cos\theta_i - p\cos\theta_j
  -(2-p)\cos\theta_{ij}\right)
\end{equation}

The approximation ratio of $\SDP(p, \gamma)$ is derived as the minimum ratio of any pair of terms in (\ref{eq:undir-exp-rev}) and (\ref{eq:undir-opt}) corresponding either to the same loop $\{i, i\}$ or to the same edge $\{i, j\}$, $i < j$. Therefore, the approximation ratio of $\SDP(p, \gamma)$ for undirected social networks is the minimum of $\rho_1(\gamma)$ and $\rho_2(p, \gamma)$, where:
\begin{align*}
 \rho_1(\gamma) & = \frac{2}{\pi}\min_{0 \leq x \leq \pi}
            \frac{f_\gamma(x)}{1-\cos x}
\hskip0.5cm\mbox{and}\\
 \rho_2(p, \gamma) & =
 \frac{2}{\pi} \min_{0 \leq x, y, z \leq \pi}
 \frac{(2-p)\,g_\gamma(x, y, z) + p f_\gamma(y) + p f_\gamma(z)}
 {2+p - p \cos y - p \cos z - (2-p)\cos x}
\end{align*}
\vskip-7mm\begin{alignat*}{2}
 \mbox{s.t.} &\ \ \ \ &
  \cos x + \cos y + \cos z \geq -1 \\
 & & \cos x - \cos y - \cos z \geq -1 \\
 & & - \cos x - \cos y + \cos z \geq -1 \\
 & & - \cos x + \cos y - \cos z \geq -1
\end{alignat*}
It can be shown numerically, that $\rho_1(0.209) \geq 0.9035$ and that
$\rho_2(0.586, 0.209) \geq 0.9032$.
 \qed\end{proof}

Combining Theorem~\ref{th:sdp-undir} and Theorem~\ref{th:bestIE-undir}, we conclude that:

\begin{theorem}\label{th:approx-undir}
For any undirected social network $G$, the IE strategy computed by $\SDP(0.586, 0.209)$ approximates the maximum revenue of $G$ within a factor of $0.8229$.
\end{theorem}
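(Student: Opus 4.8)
The plan is to derive the bound by composing the two approximation guarantees already established, treating the revenue of the best IE strategy with pricing probability $0.586$ as an intermediate benchmark. Let $R^\ast$ denote the maximum revenue of $G$, and let $\RIE(A^\ast, 0.586)$ denote the revenue of the best IE strategy with pricing probability $0.586$. The argument proceeds in two steps that sandwich this intermediate quantity between $R^\ast$ and the revenue of the strategy actually returned by the algorithm.

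First I would invoke Theorem~\ref{th:bestIE-undir}, which constructs, by randomized rounding of an optimal pricing vector, an IE strategy with pricing probability exactly $0.586$ whose revenue is at least $0.9111\,R^\ast$. Since this is one particular IE strategy with pricing probability $0.586$, the best such strategy can only do at least as well, and hence $\RIE(A^\ast, 0.586) \geq 0.9111\,R^\ast$. The essential point is that the pricing probability fixed in Theorem~\ref{th:bestIE-undir} coincides with the pricing probability supplied to $\SDP$, so both guarantees refer to the same class of IE strategies and the intermediate benchmark is meaningful.

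Second I would invoke Theorem~\ref{th:sdp-undir}, which states that $\SDP(0.586, 0.209)$ outputs an IE strategy whose expected revenue is at least $0.9032\,\RIE(A^\ast, 0.586)$, i.e. a $0.9032$-approximation of the best IE strategy with pricing probability $0.586$. Chaining the two inequalities gives that the expected revenue of the strategy returned by $\SDP(0.586, 0.209)$ is at least $0.9032 \cdot 0.9111\,R^\ast \geq 0.8229\,R^\ast$, which is the claimed bound.

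The main thing to be careful about is the alignment of the two guarantees rather than any hard computation: one must check that both theorems are stated with respect to the same pricing probability $0.586$, that Theorem~\ref{th:bestIE-undir} lower-bounds the \emph{best} IE strategy with this probability (which follows since it exhibits a concrete such strategy), and that the product $0.9032 \times 0.9111$ indeed rounds down to at least $0.8229$. Since both prerequisite theorems are assumed, no re-analysis of the SDP relaxation or the rotation-based rounding is required; the only genuine content is the multiplicative composition of the two approximation ratios.
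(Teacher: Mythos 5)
Your proposal is correct and matches the paper's own (implicit) argument exactly: the paper derives Theorem~\ref{th:approx-undir} simply by composing Theorem~\ref{th:bestIE-undir} and Theorem~\ref{th:sdp-undir}, both stated for the same pricing probability $0.586$, and the product $0.9032 \times 0.9111 \approx 0.8229$ gives the claimed bound. Your additional care in noting that Theorem~\ref{th:bestIE-undir} lower-bounds the \emph{best} IE strategy with that pricing probability is exactly the right sanity check.
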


\noindent{\em Remark.}
We can use $\rho(p, \gamma)$ and $\min\{\rho_1(\gamma), \rho_2(p, \gamma)\}$, and compute the approximation ratio of $\SDP(p, \gamma)$ for the best IE strategy with any given pricing probability $p \in [1/2, 1)$. We note that $\rho_1(\gamma)$ is $\approx 0.87856$, for $\gamma = 0$ (see e.g. \cite[Lemma~3.5]{GW95}), and increases slowly with $\gamma$. Viewed as a function of $p$, the value of $\gamma$ maximizing $\rho(p, \gamma)$ and $\rho_2(p, \gamma)$ and the corresponding approximation ratio for the revenue of the best IE strategy increase slowly with $p$ (see also Fig~\ref{fig:approx} about the dependence of $\gamma$ and the approximation ratio as a function of $p$).
For example, for directed social networks, the approximation ratio of $\SDP(0.5, 0.653)$ (resp. $\SDP(0.52, 0.685)$ and $\SDP(0.52, 0.704)$) is $0.8942$ (resp. $0.8955$ and $0.9005$). For undirected networks, the ratio of $\SDP(0.5, 0.176)$ (resp. $\SDP(0.52, 0.183)$ and $\SDP(2/3, 0.425)$) is $0.899$ (resp. $0.9005$ and $0.907$). \qed

\bibliographystyle{plain}
\bibliography{pricing}

\appendix

\section{Appendix}

\subsection{Undirected Social Networks: An Example of a Suboptimal Ordering}
\label{app:cycle}

We consider an (undirected) simple cycle with $4$ nodes, numbered as they appear on the cycle, and unit weights on its edges. Proposition~\ref{pr:bipartite} shows that the optimal ordering is $(1, 3, 2, 4)$, the optimal pricing vector is $(1, 0.5, 1, 0.5)$, and the maximum revenue is $1$. On the other hand, if the nodes are ordered as they appear in the cycle, i.e., as in $(1, 2, 3, 4)$, the optimal pricing vector is $(1, \sqrt{2}/2,(1+\sqrt{2})/2, 0.5)$, and the resulting revenue is $0.7772$.

\subsection{On the Approximability of Maximum Revenue in the Uniform Additive Model}
\label{app:approximability}

We show a simple example where different acyclic subgraphs (equivalently, different vertex orderings) of the social network allow for a different fraction of their edge weight to be translated into revenue. To this end, we consider a simple directed network $G$ on $V = \{ u_1, u_2, u_3, u_4 \}$. $G$ contains an edge from each vertex $u_i$ to each vertex $u_j$ with $j > i$, that is $6$ edges in total. Formally, $E = \{ (u_i, u_j) : 1 \leq i < j \leq 4 \}$. The weight of each edge is $1$.

In ordering $\vec{\pi}_1 = (u_1, u_2, u_3, u_4)$, all edges go forward. So, $\vec{\pi}_1$ corresponds to an acyclic subgraph with edge weight $6$. The optimal pricing probabilities for $\vec{\pi}_1$ are $\vec{p}_1 = (1, 0.7474, 0.5715, 0.5)$ and extract a revenue of $R(\vec{\pi}_1, \vec{p}_1) = 1.1964$ from $G$. Thus, $\vec{\pi}_1$ allows for a revenue equal to $19.943\%$ of its edge weight.

Similarly, ordering $\vec{\pi}_2 = (u_1, u_3, u_2, u_4)$ corresponds to an acyclic subgraph with edge weight $5$. The optimal pricing probabilities for $\vec{\pi}_2$ are $\vec{p}_2 = (1, 0.625, 0.625, 0.5)$ and extract a revenue of $R(\vec{\pi}_2, \vec{p}_2) = 1.03125$. So, $\vec{\pi}_2$ allows for a revenue equal to $20.625\%$ of its edge weight.

Ordering $\vec{\pi}_3 = (u_2, u_1, u_3, u_4)$ also corresponds to an acyclic subgraph with edge weight $5$. The optimal pricing probabilities for $\vec{\pi}_3$ are $\vec{p}_3 = (1, 1, 0.5625, 0.5)$ and extract a revenue of $R(\vec{\pi}_3, \vec{p}_3) = 1.1328$. Thus, $\vec{\pi}_3$ allows for revenue equal to $22.656\%$ of its edge weight.
Also, the revenue extracted by $\IE(\{ u_1, u_2\}, 0.5147)$ is $1.0634$. Thus, $\vec{\pi}_3$ allows for an IE strategy extracting a revenue equal to $21.268\%$ of its edge weight.

$\IE(\{ u_1, u_2\}, 0.5147)$, for example, approximates the maximum revenue of $G$ within a factor of $\frac{1.0634}{1.1964} \approx 0.8888$. On the other hand, if we consider a random ordering of $u_1$ and $u_2$ and of $u_3$ and $u_4$, we obtain a vertex ordering $\vec{\pi}'$, which combined with $\vec{p}_1$, gives an expected revenue of $\approx 1.0306$. Hence, $(\vec{\pi}', \vec{p})$ approximates the maximum revenue of $G$ under $\vec{p}_1$ within a factor of $\frac{1.0306}{1.1964} \approx 0.8614$. On the other hand, $\vec{\pi}'$ defines an acyclic subgraph of $G$ which has an expected edge weight of $5$ and approximates the edge weight of the maximum acyclic subgraph of $G$ within a factor of $\frac{5}{6} \approx 0.8333$. \qed

\end{document}